\newtheorem{example}{Example}
\def\infrule#1#2#3{\@ifnextchar[{\@infrule{#1}{#2}{#3}}{\@infrule{#1}{#2}{#3}[*]}}%
\def\@infrule#1#2#3[#4]{
 \par\bigbreak
 \vtop{
  \hangindent3em\hangafter2\leavevmode\null
  \textsf{#1:}\kern.5em\textbf{#2}\\[\smallskipamount]
  \ifx*#4 % no optional argument given
   \null\qquad$#3$
  \else % optional argument exists
   \setbox30=\hbox{\qquad$#3$\qquad#4}%
   \ifdim\wd30>\hsize
    \null\qquad$#3$\par\kern-\parskip\smallskip#4
   \else
    \null\qquad$#3$\qquad#4
   \fi
  \fi
 }%
} 
\def\sol{\mathsf{solve}} 
\def\step{\mathsf{step}} 
\def\false{\mathsf{false}}  
\def\true{\mathsf{true}}  
\def\Bd{\mathsf{B}} 
\def\lit{L} 
\def\se{\tilde{E}} 
\def\substb{\theta} 
\def\Qr{\mathsf{G}} 
\def\Qrc{\hat{\mathsf{G}}}
\def\cVi{\mathcal{V}_\mathsf{T}}  % set of term variables
\def\cVs{\mathcal{V}_\mathsf{H}}  % set of hedge variables
\def\cVf{\mathcal{V}_\mathsf{F}}  %set of function variables
\def\cVv{\mathcal{V}}  %set of variables 
\def\var{\mathit{var}}  % function which returns set of all the variables of given expression
\def\Fo{\mathcal{F}_{\sf{o}}} %set of ordered function symbols
\def\Fu{\mathcal{F}_{\sf{u}}} %set of unordered function symbols
\def\Fs{\mathcal{F}} %set of function symbols
\def\Ts{\mathcal{T}} 
\def\cC{\mathcal{C}}
\def\Ps{\mathcal{P}} %set of predicate symbols
\def\Fv{X} %  function variable
\def\fu{f_{\sf{u}}}  %unordered function symbol
\def\gu{g_{\sf{u}}}  %unordered function symbol
\def\hu{h_{\sf{u}}}  %unordered function symbol
\def\fo{f_{\sf{o}}}  %ordered function symbol
\def\go{g_{\sf{o}}}  %ordered function symbol
\def\ho{h_{\sf{o}}}  %ordered function symbol
\def\Gv{F}  % function variable or function symbol
\def\Sv{\overline{x}} %  hedge variable
\def\Iv{x}  %  term variable
\def\rh{\mathsf{R}} %regular hedge expression (rhe)
\def\concs{\cdot}  % concatenation for rhe
\def\chs{+}  % choice for rhe
\def\strs{*}  % Kleene star for rhe
\def\Input{\sf i} %input positions of predicate 
\def\Output{\sf o} %output positions of predicate
\def\myrule{\leftarrow} % rule symbol  '<- 'in A <- B1,...,Bn.
\def\eql{\doteq}   %constructor of primitive constraint (unification problem)
\def\eqls{\simeq}   %constructor of primitive constraint (unification problem)
\def\inC{\mathrel{\mathsf{in}}}  %constructor of primitive constraint (membership constraint)
\def\pmc{\mathbf{c}} %primitive constraint
\def\conjc{\mathcal K}  % conjunction of primitive constraints
\def\prog{P} % Program
\def\rbr{\rangle} % right angle to construct state 
\def\lbr{\langle}  % left angle to construct state
\def\myequiv{=} %meta equality
\def\invar{\mathit{invar}}
\def\outvar{\mathit{outvar}}
\def\perm{\mathit{perm}}
\def\boxmy{\boxempty} 
\def\sep{\mathrel{\|}}
\def\defn{\mathit{defn}}
\def\mstruct{\mathfrak{S}}
\def\istruct{\mathfrak{I}}
\def\lang{\mathcal{L}}
\def\sdomi{D}
\def\sint{I}
\def\eps{\mathtt{eps}}
\def\ts{h}
\def\tseq{T}
\def\hdg{H}
\def\lra{\leftrightarrow}
\def\la{\leftarrow}
\def\alp{\mathcal{A}} % alphabet
\def\Fvy{Y}  %  function variable Y
\def\Fvz{Z}  %  function variable Z
\def\Svy{\overline{y}} %  hedge variable y
\def\Svz{\overline{z}} %  hedge variable z
\def\cB{\mathcal{B}}
\def\clphc{\rm{CLP}(\sf{H})}
\newcommand\regl[1]{[\![{#1}]\!]}
\newcommand\mset[1]{\{\!\vert{#1}\vert\!\}}
\def\cm{\mathit{cm}}
\def\size{\mathit{size}}
\def\constr{\mathcal C}
\def\constrd{\mathcal D}
\def\lf{\mathit{lf}}
\def\rpo{>_{\rm rpo}}
\mathchardef\mhyphen="2D
\newcommand\ASTART{\bigskip\noindent\begin{minipage}[b]{0.5\linewidth}}
\newcommand\AENDSKIP{\end{minipage}\bigskip}
\newcommand\AEND{\end{minipage}}
\begin{document}

\title{$\clphc${\rm :} Constraint Logic Programming for Hedges\footnotetext{This is an extended version of a paper presented at the Twelfth International Symposium on Functional and Logic Programming (FLOPS 2014), invited as a rapid publication in TPLP. The authors acknowledge the assistance of the conference chairs Michael Codish and Eijiro Sumii.}}

\author[B. Dundua, M. Florido, T. Kutsia, and M. Marin]
{BESIK DUNDUA\\
VIAM, Tbilisi State University, Georgia and
LIACC, University of Porto, Portugal\\
\email{bdundua@gmail.com}
\and
M\'{A}RIO FLORIDO\\
DCC-FC and LIACC,
University of Porto, Portugal
\email{amf@dcc.fc.up.pt}
\and
TEMUR KUTSIA\\
RISC,
Johannes Kepler University Linz,  Austria
\email{kutsia@risc.jku.at}
\and
MIRCEA MARIN\\
West University of Timi\c{s}oara, Romania
\email{mmarin@info.uvt.ro} 
}

%\pagerange{\pageref{firstpage}--\pageref{lastpage}}
%\volume{\textbf{10} (3):}
%\jdate{September 2014}
%\setcounter{page}{1}
%\pubyear{2014}

\maketitle

\label{firstpage}

\vspace{-0.3cm}

\begin{abstract}
$\clphc$ is an instantiation of the general constraint logic programming scheme with the constraint domain of hedges. Hedges are finite sequences of unranked terms, built over variadic function symbols and three kinds of variables: for terms, for hedges, and for function symbols. Constraints involve equations between unranked terms and atoms for regular hedge language mem\-bership. We study algebraic semantics of $\clphc$ programs, define a sound, terminating, and incomplete constraint solver, investigate two fragments of constraints for which the solver returns a complete set of solutions, and describe classes of programs that generate such constraints. \vspace{0.2cm}

\noindent To appear in Theory and Practice of Logic Programming (TPLP).
\end{abstract}
\begin{keywords}
Constraint logic programming, constraint solving, hedges.
\end{keywords}
\vspace{-0.35cm}

\section{Introduction}
\label{sect:intro}

Hedges are finite sequences of unranked terms. These are terms in which function symbols do not have a fixed arity: The same symbol may have a different number of arguments in different places. Manipulation of such expressions has been intensively studied in  recent years in the context of XML processing, rewriting, automated reasoning, knowledge representation, just to name a few.

When working with unranked terms, variables that can be instantiated with hedges (hedge variables) are a pragmatic necessity. In (pattern-based) programming, hedge variables help to write neat, compact code. Using them, for instance, one can extract duplicates from a list with just one line of a program. 
Several languages and formalisms operate on unranked terms and hedges. The programming language of Mathematica \cite{DBLP:books/daglib/0011324} is based on hedge pattern matching. Languages such as Tom \cite{DBLP:conf/rta/BallandBKMR07}, Maude \cite{DBLP:conf/maude/2007}, ASF+SDF \cite{DBLP:journals/entcs/BrandDHJJKKMOSVVV01} provide capabilities similar to hedge matching (via associative functions). $\rho$Log \cite{DBLP:journals/jancl/MarinK06} extends logic programming with hedge transformation rules, see also \cite{MarinKutsia03WIL}. XDuce \cite{DBLP:journals/jfp/HosoyaP03} enriches untyped hedge matching with regular expression types. The Constraint Logic Programming schema has been extended to work with hedges in CLP(Flex) \cite{DBLP:conf/coopis/CoelhoF04}, which is a basis for the XML processing language XCentric \cite{DBLP:conf/widm/CoelhoF07} and a Web site verification language VeriFLog \cite{DBLP:conf/apweb/CoelhoF06}.

The goal of this paper is to describe a precise semantics of constraint logic programs over hedges. We consider positive CLP programs with two kinds of primitive constraints: equations between hedges, and membership in a hedge regular language. Function symbols are unranked. Predicate symbols have a fixed arity. Terms may contain three kinds of variables: for terms (term variables), for hedges (hedge variables), and for function symbols (function variables). Moreover, we may have function symbols whose argument order does not matter (unordered symbols): a kind of generalization of the commutativity property to unranked terms. As it turns out, such a language is very flexible and permits to write short, yet quite clear and intuitive code: One can see examples in Sect.~\ref{sect:examples}. 
We call this language $\clphc$, for CLP over hedges. It generalizes CLP(Flex) with function variables, unordered functions, and membership constraints. Hence, as a special case, our paper describes the semantics of CLP(Flex). Moreover, as hedges generalize strings, $\clphc$ can be seen also as a generalization of CLP over strings $\rm{CLP}(\mathcal{S})$~\cite{DBLP:conf/slp/Rajasekar94}, string processing features of Prolog III~\cite{DBLP:journals/cacm/Colmerauer90}, and CLP over regular sets of strings CLP($\rm\Sigma^*$)~\cite{DBLP:conf/iclp/Walinsky89}.

Note that some of these languages allow an explicit size factor for string variables, restricting the length of strings they can be instantiated with. We do not have size factors, but can express this information easily with constraints. For instance, to indicate the fact that a hedge variable $\Sv$ can be instantiated with a hedge of  minimal length 1 and  maximal length 3, we can write a disjunction $\Sv \eql x \lor \Sv \eql (x_1,x_2)\lor \Sv \eql (x_1,x_2,x_3) $, where the lower case $x$'s are term variables.

Flexibility and the expressive power of $\clphc$ has its price: Equational constraints with hedge variables, in general, may have infinitely many solutions~(\citeNP{DBLP:conf/aisc/Kutsia04}; \citeyearNP{DBLP:journals/jsc/Kutsia07}). Therefore, any complete equational constraint solving procedure with hedge variables is nonterminating. The solver we describe in this paper is sound and terminating, hence incomplete for arbitrary constraints. However, there are fragments of constraints for which it is complete, i.e., computes all  solutions. One such fragment is so called well-moded fragment, where variables in one side of equations (or in the left hand side of the membership atom) are guaranteed to be instantiated with ground expressions at some point. This effectively reduces constraint solving to hedge matching~(\citeNP{KutsiaMarin05UNIF}; \citeyearNP{DBLP:conf/lpar/KutsiaM05}), plus some early failure detection rules. Another fragment for which the solver is complete is named after the Knowledge Interchange Format, KIF~\cite{genesereth92knowledge}, where hedge variables are permitted only in the last argument positions. We identify forms of $\clphc$ programs which give rise to well-moded or KIF constraints.\footnote{Conceptually, such an approach can be seen to be similar to, e.g., Miller's approach to higher-order logic programming~\cite{DBLP:journals/logcom/Miller91}, where the fragment $L_\lambda$ uses unitary unification for higher-order patterns instead of undecidable higher-order unification.}

We can easily model lists with ordered function symbols and multisets with the help of unordered ones. In fact, since we may have several such symbols, we can directly model colored multisets. Constraint solving over lists, sets, and multisets %(modeled as terms over ranked alphabet with corresponding axioms) 
has been intensively studied, see, e.g.,~\cite{DBLP:journals/tocl/DovierPR08} and references there, and the CLP schema can be extended to accommodate them. In our case, an advantage of using hedge variables in such terms is that hedge variables can give immediate access to collections of subterms via unification. It is very handy in programming.

This paper is an extended and revised version of~\cite{DBLP:conf/flops/DunduaFKM14}. It is organized as follows: After establishing the terminology in Section~\ref{sect:preliminaries}, we give two motivating examples in Section~\ref{sect:examples} to illustrate $\clphc$. The algebraic semantics is studied in Section~\ref{sect:semantics}. The constraint solver is introduced in Section~\ref{sect:solver}. The operational semantics of $\clphc$ is described in Section~\ref{sect:operational}. In Sections~\ref{sect:well-moded} and \ref{sect:kif}, we introduce the well-moded and KIF fragments, respectively. Section~\ref{sect:conclusion} contains concluding remarks.

\section{Preliminaries}
\label{sect:preliminaries}

%The language $\lang$ based on the 
For common notation and definitions, we mostly follow~\cite{DBLP:journals/jlp/JaffarMMS98}.
The alphabet $\alp$ consists of the following pairwise disjoint sets of symbols:
\begin{itemize}
 \item $\cVi$: term variables, denoted by $x,y,z,\ldots$,
 \item $\cVs$: hedge variables, denoted by $\Sv,\Svy,\Svz,\ldots$,
 \item $\cVf$: function variables, denoted by $\Fv,\Fvy,\Fvz,\ldots$,
 \item $\Fu$: unranked unordered function symbols, denoted by $\fu,\gu,\hu,\ldots$, 
 \item $\Fo$: unranked ordered function symbols, denoted by $\fo,\go,\ho,\ldots$, 
 \item $\Ps$: ranked predicate symbols, denoted by $p,q,\ldots$.
\end{itemize}
The sets of variables are countable, while the sets of function and predicate symbols are finite.
In addition, $\alp$ also contains 
 \begin{itemize}
 \item The propositional constants $\true$ and $\false$, the binary equality predicate $\eql$, and the unranked membership predicate $\inC$.
 \item Regular operators: $\eps, \concs, \chs, \strs$.
 \item Logical connectives and quantifiers: $\neg$, $\lor$, $\land$, $\to$, $\lra$, $\exists$, $\forall$.
 \item Auxiliary symbols: parentheses and the comma.
\end{itemize}
\emph{Function symbols,} denoted by $f,g,h,\ldots$, are elements of the set $\Fs=\Fu\cup\Fo.$ A {\em variable} is an element of the set $\cVv=\cVi \cup \cVs \cup \cVf$. A {\em functor}, denoted by $F$, is a common name for a function symbol or a function variable. 

We define \emph{terms, hedges,} and other syntactic categories over $\alp$ as follows:
\begin{alignat*}{3}
 t & ::= \Iv \mid f(\hdg)  \mid \Fv(\hdg)  & &\text{Term}\\
 \tseq & ::= t_1,\ldots,t_n \quad (n\ge 0) & & \text{Term sequence}\\
 \ts & ::= t \mid \Sv  & & \text{Hedge element}\\
 \hdg & ::= \ts_1,\ldots,\ts_n \quad (n\ge 0) &  \qquad & \text{Hedge}
\end{alignat*}
We denote the set of terms by $\Ts(\Fs,\cVv)$ and the set of ground (i.e., variable-free) terms by $\Ts(\Fs)$. Besides the letter $t$, we use also $r$ and $s$ to denote terms. 

We make a couple of conventions to improve readability. The empty hedge is written as $\epsilon$. The terms of the form $a(\epsilon)$ and $X(\epsilon)$ are abbreviated as $a$ and $X$, respectively. We put parentheses around hedges, writing, e.g., $(f(a),\Sv,b)$ instead of $f(a),\Sv,b$. For hedges $H=(h_1,\ldots,h_n)$ and $H'=(h'_1,\ldots,h'_{n'})$, the notation $(H,H')$ stands for the hedge $(h_1,\ldots,h_n,h'_1,\ldots,h'_{n'})$.

Two hedges are \emph{disjoint} if they do not share a common element. For instance, $(f(a),x,b)$ and $(f(x),f(b,f(a)))$ are disjoint, whereas $(f(a),x,b)$ and $(f(b),f(a))$ are not, because $f(a)$ is their common element.

An \emph{atom} is a formula of the form $p(t_1,\ldots,t_n)$, where $p\in\Ps$ is an $n$-ary predicate symbol. Atoms are denoted by $A$. 

\emph{Regular hedge expressions} $\rh$ are defined inductively:
\begin{alignat*}{1}
     \rh ::={} & \eps \mid (\rh \concs \rh) \mid \rh \chs \rh \mid \rh^\strs \mid f(\rh) 
\end{alignat*}
where the dot $\concs$  stands for concatenation,  $\chs$ for choice, and $\strs$ for repetition.
\emph{Primitive constraints} are either term equalities $\doteq(t_1,t_2)$ or membership for hedges $\mathsf{in}(\hdg, \rh)$. They are written in infix notation, such as $t_1\eql t_2$, and $\hdg \inC \rh$.  

A \emph{literal} $L$ is an atom or a primitive constraint. \emph{Formulas} are defined as usual.  A \emph{constraint} is an arbitrary first-order formula built over $\true$, $\false$, and primitive constraints. 

The set of free variables of a syntactic object $O$ is denoted by $\var(O)$. We let $\exists_V N$ denote the formula $\exists v_1\cdots \exists v_n N$, where $V=\{v_1,\ldots,v_n\}\subset \cVv$. $\overline{\exists}_V N$ denotes $\exists_{\var(N)\setminus V} N$. We write $\exists N$ (resp. $\forall N$) for the existential (resp. universal) closure of $N$.
We refer to a language over the alphabet $\alp$ as $\lang(\alp)$.

A \emph{substitution} is a mapping from term variables to terms, from hedge variables to hedges, and from function variables to functors,  such that all but finitely many variables are mapped to themselves. We use lower case Greek letter to denote them. 

For an expression (i.e., a term, hedge, functor, literal, or a formula) $e$ and a substitution $\sigma$, we write $e\sigma$ for the \emph{instance} of $e$ under $\sigma$. This is a standard operation that replaces in $e$ each free occurrence of a variable $v$ by its image under $\sigma$, i.e., by $\sigma(v)$. If needed, bound variables are renamed to avoid variable capture. For instance, for the constraint $\constr = \forall \Iv. f(\Fv(a,\Sv), \Sv) \eql f(g(\Svy,a,b,\Iv), b,\Iv)$ and the substitution $\sigma=\{\Fv \mapsto g,\allowbreak  \Sv \mapsto (b,\Iv), \Svy \mapsto \epsilon, \Iv \mapsto f(c)\}$, we have $\constr\sigma=\forall z. f(g(a,b,\Iv), b,\Iv) \eql f(g(a,b,z), b,z)$. A substitution $\sigma$ is \emph{grounding} for an expression $e$ if $e\sigma$ is a ground expression. 

A \emph{(constraint logic) program} is a finite set of \emph{rules} of the form $\forall(L_1\land \cdots \land L_n \to A)$, $n\geq 0$, usually written as $A \myrule L_1,\ldots, L_n$,  where $A$ is an atom and $L_1, \ldots,L_n$ are literals other than $\true$ and $\false$. A \emph{goal} is a formula of the form $\exists(L_1 \land\cdots \land L_n)$, $n\ge 0$, usually written as $L_1,\ldots,L_n$ where $L_1, \ldots,L_n$ are literals other than $\true$ and $\false$.

We say a variable is \emph{solved} in a conjunction of primitive constraints $\conjc=\pmc_1\land\cdots \land \pmc_n$, if there is a % primitive constraint 
$\pmc_i,$ $ 1\le i \le n$,  such that
\begin{itemize}
\item the variable is $x$, $\pmc_i=\Iv\eql t$, and $\Iv$ occurs neither in $t$ nor elsewhere in $\conjc $, or
\item the variable is $\Sv$, $\pmc_i=\Sv\eql \hdg$, and $\Sv$ occurs neither in $\hdg$ nor elsewhere in $\conjc$, or
\item the variable is $\Fv$, $\pmc_i=\Fv \eql \Gv $ and $\Fv $ occurs neither in $\Gv$ nor elsewhere in $\conjc$, or
\item the variable is $\Iv$, $\pmc_i=\Iv \inC f(\rh)$ and $\Iv $ does not occur in membership constraints elsewhere in $\conjc $, or
\item the variable is $\Sv$, $\pmc_i=\Sv \inC \rh$, $\Sv$ does not occur in membership constraints elsewhere in $\conjc $, and $\rh$ has the form $\rh_1 \concs \rh_2$ or $\rh_1^*$.
\end{itemize}
In this case we also say that $\pmc_i$ is \emph{solved in} $\conjc$. Moreover, $\conjc$ is called \emph{solved} if  for any $1\le i \le n$, $\pmc_i$ is solved in it. $\conjc$ is \emph{partially solved}, if  for any $1\le i \le n$,  $\pmc_i$ is solved in $\conjc$, or has one of the following forms:
\allowdisplaybreaks
\begin{itemize}
\item Membership atom:
  \begin{itemize}
   \item $\fu(\hdg_1,\Sv,\hdg_2) \inC \fu(\rh)$.
    \item $(\Sv,\hdg) \inC \rh $  where  $\hdg \neq \epsilon $ and $\rh$ has the form  $\rh_1 \concs \rh_2 $ or $\rh_1^\strs$.
\end{itemize}
\item Equation: 
   \begin{itemize}
     \item $ (\Sv,\hdg_1)\eql (\Svy,\hdg_2) $ where $\Sv \not = \Svy$, $\hdg_1\neq \epsilon$ and $\hdg_2\neq \epsilon$.
     \item $ (\Sv,\hdg_1)\eql (\tseq,\Svy,\hdg_2) $, where $\Sv\not \in \var(\tseq)$, $\hdg_1\neq \epsilon$, and $\tseq\neq \epsilon$.  The variables $\Sv $ and $\Svy$ are not necessarily distinct.
  \item $\fu(\hdg_1,\Sv,\hdg_2)\eql \fu(\hdg_3,\Svy,\hdg_4)$ where $(\hdg_1,\Sv,\hdg_2)$ and $(\hdg_3,\Svy,\hdg_4)$ are disjoint.
\end{itemize}
 \end{itemize}

A constraint is \emph{solved}, if it is either $\true$ or a non-empty quantifier-free disjunction of solved conjunctions. A constraint is \emph{partially solved}, if it is either $\true$ or a non-empty quantifier-free disjunction of partially solved conjunctions.

\section{Motivating Examples}
\label{sect:examples}

In this section we illustrate the expressive power of $\clphc$ by two examples: the rewriting of terms from some regular hedge language and an implementation of the recursive path ordering with status.

\begin{example}
  \label{exmp:rewriting}
  The general rewriting mechanism can be implemented with two $\clphc$ clauses: The base case \[\mathit rewrite(x,y) \la rule(x,y)\] and the recursive case \[\mathit rewrite(\Fv(\Sv,x,\Svy), \Fv(\Sv,y,\Svy)) \la rewrite(x, y),\] where $x,y$ are term variables, $\Sv,\Svy$ are hedge variables, and $\Fv$ is a function variable. It is assumed that there are clauses which define the \emph{rule} predicate. The base case says that a term $x$ can be rewritten to $y$ if there is a rule which does it. The recursive case rewrites a nondeterministically selected subterm $x$ of the input term to $y$, leaving the context around it unchanged. Applying the base case before the recursive case gives the outermost strategy of rewriting, while the other way around implements the innermost one.

An example of the definition of the \emph{rule} predicate is \[\mathit rule(\Fv(\Sv_1,\Sv_2),\Fv(\Svy)) \la\ \Sv_1 \inC f(a^\strs)\concs b^\strs,\ \Sv_1 \eql (x,\Svz),\ \Svy \eql (x,f(\Svz)),\]
where the constraint\footnote{In the notation defined in the previous section, strictly speaking, we need to write this constraint as $f(a(\eps)^\strs)\concs b(\eps)^\strs$. However, for brevity and clarity of the presentation we omit $\eps$ here.} $\Sv_1 \inC f(a^\strs)\concs b^\strs$ requires $\Sv_1$ to be instantiated by hedges from the language generated by the regular hedge expression $f(a^\strs)\concs b^\strs$ (that is, from the language $\{f, f(a), f(a,a), \ldots, (f,b), (f(a),b), \ldots, (f(a,\ldots,a),b,\ldots,b), \ldots\}$).

With this program, the goal $\mathit \la\ rewrite(f(f(f(a,a),b)), x)$ has two answer substitutions: $\{x\mapsto f(f(f(a,a),f))\}$ and $\{x\mapsto f(f(f(a,a),f(b)))\}$. To obtain them, the goal is first transformed by the recursive clause, leading to the new goal $\mathit \la\ rewrite(f(f(a,a),b), y)$ together with the constraint $x\eql f(y)$ for $x$. The next transformation is performed by the base case of the \emph{rewrite} predicate, resulting into the goal $\mathit \la\ rule(f(f(a,a),b), y)$. This goal is then transformed by the \emph{rule} clause, which gives the constraint $\Fv(\Sv_1,\Sv_2) \eql f(f(a,a),b) \land y \doteq \Fv(\Svy) \land \Sv_1 \inC f(a^\strs)\concs b^\strs \land \Sv_1 \eql (x',\Svz)\land  \Svy \eql (x',f(\Svz)) \land x\eql f(y)$. This constraint has two solutions, depending whether $\Sv_1$ equals $f(a,a)$ or to $(f(a,a),b)$. From one we get $x\eql f(f(f(a,a),f))$, and from the other $x\eql f(f(f(a,a),f(b)))$. These solutions give the above mentioned answers.
\end{example}

\begin{example}
  \label{exmp:rpo}
  The recursive path ordering (rpo) $\rpo$ is a well-known term ordering~\cite{DBLP:journals/tcs/Dershowitz82} used to prove termination of rewriting systems. Its definition is based on a precedence order $\succ$ on function symbols, and on extensions of $\rpo$ from terms to tuples of terms. There are two kinds of extensions: lexicographic $\rpo^{\mathit{lex}}$, when terms in tuples are compared from left to right, and multiset $\rpo^{\mathit{mul}}$, when terms in tuples are compared disregarding the order. The status function $\tau$ assigns to each function symbol either \emph{lex} or \emph{mul} status. Then for all (ranked) terms $s,t$, we define $s \rpo t$,  if $s=f(s_1,\ldots, s_m)$ and
  \begin{enumerate}
    \item either $s_i=t$ or $s_i \rpo t$ for some $s_i$, $1\le i \le m$, or
    \item $t=g(t_1,\ldots,t_n)$, $s\rpo t_i$ for all $i, 1\le i \le n$, and either
      \begin{enumerate}
        \item $f \succ g$, or (b) $f=g$ and $(s_1,\ldots, s_n) \rpo^{\tau(f)} (t_1,\ldots,t_n)$.
      \end{enumerate}
  \end{enumerate}

To implement this definition in $\clphc$, we use the predicate \emph{rpo} for $\rpo$ between two terms, and four helper predicates: $\mathit{rpo\_all}$ to implement the comparison $s\rpo t_i$ for all $i$; $\mathit{prec}$ to implement the comparison depending on the precedence; $\mathit{ext}$ to implement the comparison with respect to an extension of $\rpo$; and \emph{status} to give the status of a function symbol. The predicate \emph{lex} implements $\rpo^{\mathit{lex}}$ and \emph{mul} implements $\rpo^{\mathit{mul}}$. The symbol $\langle \rangle$ is an unranked function symbol, and $\{\}$ is an unordered unranked function symbol. As one can see, the implementation is rather straightforward and closely follows the definition. $\rpo$ requires four clauses, since there are four alternatives in the definition:
\begin{alignat*}{1}
1. \quad & \mathit{rpo(\Fv(\Sv,x,\Svy), x).} \\
         & \mathit{rpo(\Fv(\Sv,x,\Svy), y) \la rpo(x, y).}\\
2a. \quad & \mathit{rpo(\Fv(\Sv), \Fvy(\Svy)) \la rpo\_all(\Fv(\Sv), \langle \Svy \rangle), prec(\Fv,\Fvy).}\\
2b. \quad & \mathit{rpo(\Fv(\Sv), \Fv(\Svy)) \la rpo\_all(\Fv(\Sv), \langle \Svy \rangle), ext(\Fv(\Sv), \Fv(\Svy)).}
\end{alignat*}

$\mathit{rpo\_all}$ is implemented with recursion:
\begin{alignat*}{1}
& \mathit{rpo\_all(x, \langle\, \rangle).} \\
& \mathit{rpo\_all(x, \langle y,\Svy \rangle) \la rpo(x,y), rpo\_all(x, \langle \Svy \rangle).}
\end{alignat*}
The definition of \emph{prec} as an ordering on finitely many function symbols is straightforward. More interesting is the definition of \emph{ext}:
\begin{alignat*}{1}
& \mathit{ext(\Fv(\Sv), \Fv(\Svy)) \la status(\Fv,lex), lex(\langle\Sv \rangle, \langle\Svy \rangle).} \\
& \mathit{ext(\Fv(\Sv), \Fv(\Svy)) \la status(\Fv,mul), mul(\{\Sv \}, \{\Svy \}).}
\end{alignat*}
\emph{status} can be given as a set of facts, \emph{lex} needs one clause, and \emph{mul} requires three: 
\begin{alignat*}{1}
& \mathit{lex(\langle\Sv,x,\Svy \rangle, \langle\Sv,y,\Svz \rangle) \la rpo(x,y).} \\
& \mathit{mul(\{x,\Sv\},\{\}).} \\
& \mathit{mul(\{x,\Sv\},\{x,\Svy\}) \la mul(\{\Sv\},\{\Svy\}).}\\
& \mathit{mul(\{x,\Sv\},\{y,\Svy\}) \la rpo(x,y),\, mul(\{x,\Sv\},\{\Svy\}).}
\end{alignat*}
\end{example}
That's all. This example illustrates the benefits of all three kinds of variables we have and unordered function symbols.

\section{Algebraic Semantics}
\label{sect:semantics}

For a given set $S$, we denote by $S^*$ the set of finite, possibly empty, sequences of  elements of $S$, and by $S^n$ the set of sequences of length $n$ of elements of $S$. The empty sequence of symbols from any set $S$ is denoted by $\epsilon$. Given a sequence $s=(s_1,s_2,\ldots,s_n)\in S^n$, we denote by $\perm(s)$ the set of sequences $\{(s_{\pi(1)},s_{\pi(2)},\ldots,s_{\pi(n)})\mid \pi$ is a permutation of $\{1,2,\ldots,n\}\}.$

A \emph{structure} $\mstruct$ for a language $\lang(\alp)$ is a tuple $\langle \sdomi,\sint \rangle$ made of  a non-empty carrier set of \emph{individuals} and an interpretation function $\sint$ that maps each function symbol $f\in \Fs$ to a function $\sint(f):\sdomi^*\to\sdomi$, and  each $n$-ary predicate symbol $p\in \Ps$ to an $n$-ary relation $\sint(p)\subseteq \sdomi^n$. Moreover, if $f\in\Fu$ then $\sint(f)(s)=\sint(f)(s')$ for all $s\in D^*$ and $s'\in\perm(s).$
A \emph{variable assignment} for such a structure is a  function with domain $\cVv$ that maps  term variables to elements of $\sdomi$, hedge variable to elements of $\sdomi^*$, and  function variables  to functions from $\sdomi^*$ to $\sdomi$.

The interpretations of our syntactic categories w.r.t. a structure $\mstruct=\langle \sdomi,\sint \rangle$ and variable assignment $\sigma$ is shown below. The interpretations $\regl{\hdg}_{\mstruct,\sigma}$ of hedges (including terms)  is defined as follows:
\begin{alignat*}{7}
& \regl{v}_{\mstruct,\sigma}  :=\sigma(v), \text{ where } v \in \cVi \cup \cVs.\\
& \regl{f(\hdg)}_{\mstruct,\sigma}   := \sint(f)(\regl{\hdg}_{\mstruct,\sigma}).\\
& \regl{\Fv(\hdg)}_{\mstruct,\sigma}    := \sigma(\Fv)(\regl{\hdg}_{\mstruct,\sigma}).\\
&  \regl{(\ts_1,\ldots, \ts_n)}_{\mstruct,\sigma} := (\regl{\ts_1}_{\mstruct,\sigma},\ldots,\regl{\ts_n}_{\mstruct,\sigma}).
\end{alignat*}

Note that terms are interpreted as elements of $\sdomi$ and hedges as elements of $\sdomi^*$. We may omit $\sigma$ and write simply $\regl{E}_\mstruct$ for the interpretation of a ground expression~$E$. The interpretation of regular expressions is defined as follows:
\allowdisplaybreaks
\begin{alignat*}{7}
  & \regl{\mathtt{eps}}_\mstruct  := \{\mathtt{\epsilon}\}.\\
  & \regl{f(\rh)}_\mstruct  := \{\sint(f)(\hdg) \mid \hdg\in \regl{\rh}_\mstruct\}. \\
  & \regl{\rh_1 \chs \rh_2}_\mstruct  := \regl{\rh_1}_\mstruct \cup \regl{\rh_2}_\mstruct.  \\
  & \regl{\rh_1 \concs \rh_2}_\mstruct  := \{(\hdg_1,\hdg_2) \mid \hdg_1\in \regl{\rh_1}_\mstruct, \hdg_2\in\regl{\rh_2}_\mstruct\}.\\
  & \regl{\rh^\strs}_\mstruct := \regl{\rh}^*_\mstruct.
\end{alignat*}

Primitive constraints are interpreted with respect to a structure $\mstruct$ and variable assignment $\sigma$ as follows: 
\begin{alignat*}{1}
 & \mstruct \models_\sigma t_1\eql t_2 \text{ iff } \regl{t_1}_{\mstruct,\sigma}=\regl{t_2}_{\mstruct,\sigma}.\\
 & \mstruct \models_\sigma \hdg\inC \rh \text{ iff } \regl{\hdg}_{\mstruct,\sigma}\in\regl{\rh}_\mstruct.\\
 & \mstruct \models_\sigma p(t_1,\ldots,t_n) \text{ iff } \sint(p)(\regl{t_1}_{\mstruct,\sigma},\ldots,\regl{t_n}_{\mstruct,\sigma}).
\end{alignat*}

The notions  $\mstruct \models N$ for {validity of an arbitrary formula $N$ in $\mstruct$,} and $\models N$ for validity of $N$ in any structure are defined in the standard way. 

An \emph{intended structure} is a structure $\istruct$ with the carrier set $\Ts(\Fs)$ and interpretations  $\sint$  defined for every $f\in \Fs$ by $\sint(f)(\hdg):=f(\hdg)$.   Thus, intended structures identify terms and hedges  by themselves. Also, if $\rh$ is any regular hedge expression then $\regl{\rh}_{\istruct}$ is the same in all intended structures, and will be denoted  by $\regl{\rh}$. Other remarkable properties of intended structures $\istruct$ are: Variable assignments are substitutions, $\istruct\models_\vartheta t_1\doteq t_2$ iff $t_1\vartheta=t_2\vartheta$, and $\istruct\models_\vartheta \hdg\mathop{\mathsf{in}} \rh$ iff $\hdg\vartheta\in\regl{\rh}$. 

Given a program $P$, its Herbrand base $\cB_P$ is, naturally, the set of all atoms $p(t_1,\ldots,t_n)$, where $p$ is an $n$-ary user-defined predicate in $P$ and $(t_1,\ldots,t_n)\in \Ts(\Fs)^n$. Then an intended interpretation of $P$ corresponds uniquely to a subset of $\cB_P$. An \emph{intended model} of $P$ is an intended interpretation of $P$ that is its model. 

As usual, we will write $P\models G$ if $G$ is a goal which holds in every model of $P$. Since our programs consist of positive clauses, the following facts hold:
\begin{enumerate}
\item Every program $P$ has a least intended model, which we denote by $lm(P)$.
\item If $G$ is a goal then $P\models G$ iff $lm(P)$ is a model of $G$.
\end{enumerate}

A ground substitution $\vartheta$ is an \emph{intended solution} (or simply \emph{solution}) of a constraint $\cC$ if $\istruct\models \constr\vartheta$ for all intended structures $\istruct$. 
\begin{restatable}{theorem}{oneThmSatisfiable}
\label{thm:sat}
If the constraint $\constr$ is solved, then $\istruct\models \exists \constr$ holds for all intended structures $\istruct$.
\end{restatable}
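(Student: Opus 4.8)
The plan is to reduce the claim to a single solved conjunction and then build an explicit grounding substitution that witnesses satisfiability uniformly across all intended structures. Since $\constr$ is solved, it is either $\true$ (for which $\istruct\models\exists\true$ is immediate) or a quantifier-free disjunction $\conjc_1\lor\cdots\lor\conjc_m$ of solved conjunctions. Because $\exists$ distributes over $\lor$, it suffices to treat one solved conjunction $\conjc=\pmc_1\land\cdots\land\pmc_n$. I will exhibit a single grounding substitution $\vartheta$ with $\istruct\models_\vartheta\conjc$ for every intended $\istruct$; this suffices because in intended structures variable assignments are substitutions, $\regl{E}_{\istruct,\vartheta}=E\vartheta$ is purely syntactic, and each language $\regl{\rh}$ is the same in every $\istruct$, so one $\vartheta$ serves all of them at once.

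First I would record the shape of $\conjc$. Every $\pmc_i$ must fit one of the solved forms, so its left-hand side is a single variable: either an equation $v\eql E$ whose solved variable $v$ occurs nowhere else in $\conjc$, or a membership $v\inC\rh$ (with $v$ a term or hedge variable) where $v$ occurs in no other membership of $\conjc$. Partition the variables of $\conjc$ into $V_{\mathrm{eq}}$ (left sides of equations), $V_{\mathrm{mem}}$ (left sides of memberships), and the remaining $V_{\mathrm{free}}$. The side conditions give two structural facts I would verify: (i) each solved variable is designated by exactly one $\pmc_i$, and $V_{\mathrm{eq}}\cap V_{\mathrm{mem}}=\emptyset$ (two constraints sharing a left-hand variable would violate an ``occurs nowhere else'' or ``no other membership'' clause); (ii) no variable of $V_{\mathrm{eq}}$ occurs in any right-hand side $E$ or in any $\rh$ (regular expressions contain no variables), so right-hand sides mention only variables of $V_{\mathrm{mem}}\cup V_{\mathrm{free}}$.

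Then I would define $\vartheta$ in three layers so that (ii) rules out any circular dependency. For $v\in V_{\mathrm{mem}}$ with constraint $v\inC\rh_v$, pick a ground element $e_v\in\regl{\rh_v}$ and set $\vartheta(v):=e_v$; this is possible because every regular hedge expression denotes a non-empty set of ground hedges, which follows by an easy structural induction since the grammar of $\rh$ has no empty-language constructor. For $v\in V_{\mathrm{free}}$, assign fixed ground values ($\epsilon$ for hedge variables, some fixed ground term for term variables --- available since the carrier $\Ts(\Fs)$ is non-empty --- and a fixed function symbol for function variables). Finally, for $v\in V_{\mathrm{eq}}$ with $v\eql E_v$, set $\vartheta(v):=E_v\vartheta$; by (ii) every variable of $E_v$ has already been assigned a ground value, so this is well-defined, and the single occurrence of $v$ guarantees no feedback.

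It remains to check $\istruct\models_\vartheta\pmc_i$ for each $i$: equations hold by construction, since $\regl{v}_{\istruct,\vartheta}=\vartheta(v)=E_v\vartheta=\regl{E_v}_{\istruct,\vartheta}$, and memberships hold since $\regl{v}_{\istruct,\vartheta}=e_v\in\regl{\rh_v}$. This yields $\istruct\models\exists\conjc$, hence $\istruct\models\exists\constr$, for every intended structure. I expect the main obstacle to be the well-definedness of $\vartheta$ --- the absence of cyclic dependencies among the equations together with the guaranteed solvability of the memberships. Both rest squarely on the precise ``occurs neither in $\ldots$ nor elsewhere'' side conditions in the definition of a solved variable (which make $V_{\mathrm{eq}}$ variables sinks and keep $V_{\mathrm{eq}}$ and $V_{\mathrm{mem}}$ apart) and on the non-emptiness of $\regl{\rh}$; the per-constraint verifications are then routine.
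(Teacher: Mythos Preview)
Your proposal is correct and follows essentially the same approach as the paper: reduce to a single solved conjunction, pick ground witnesses for the membership variables (using non-emptiness of $\regl{\rh}$), ground the remaining free variables arbitrarily, and then define the equation variables by instantiating their right-hand sides. You are a bit more explicit than the paper in separating out $V_{\mathrm{free}}$ and in justifying $\regl{\rh}\neq\emptyset$ by structural induction on the grammar, but the underlying construction and the acyclicity argument are the same.
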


\section{Solver}
\label{sect:solver}

In this section we present a constraint solver for quantifier-free constraints in DNF. It is based on rules, transforming a constraint in \emph{disjunctive normal form} (DNF) into a constraint in DNF. We say a constraint is in DNF, if it has a form $\conjc_1 \lor \cdots \lor  \conjc_n$, where $\conjc$'s are conjunctions of $\true$, $\false$, and primitive constraints. The number of rules is not small (as it is usual for such kind of solvers, cf., e.g., \cite{DBLP:journals/toplas/DovierPPR00,DBLP:journals/jsc/Comon98a}). To make their comprehension easier, we group them so that similar ones are collected together in subsections. Within each subsection, for better readability, the rule groups are put between horizontal lines.

Before going into the details, we introduce a more conventional way of writing expressions, some kind of syntactic sugar, that should make reading easier. Instead of $F_1() \eql F_2()$ and $\fo(\hdg_1) \eql \fo(\hdg_2)$ we write $F_1 \eql F_2$ and $\hdg_1 \eql \hdg_2$ respectively. The symmetric closure of the relation $\eql$ is denoted by $\eqls$. The rules are applied in any context, i.e., they behave as rewrite rules. Moreover, when a rule applies to a conjunction of the form $L \land \conjc$, it is intended to act on an entire conjunct of the DNF, modulo associativity and commutativity of $\land$. These assumptions guarantee that the constraint obtained after each rule application is again in DNF.

\subsection{Rules}
\label{subsect:rules}
\subsubsection*{Logical Rules.} There are eight logical rules which are applied at any depth in constraints, modulo associativity and commutativity of disjunction and conjunction. $N$ stands for any formula. We denote the whole set of rules by {\sf Log}. %

\vskip\medskipamount
\leaders\vrule width \textwidth \vskip0.4pt 
\leaders\vrule width 0.1\textwidth \vskip4pt 
\nointerlineskip
\begin{alignat*}{15}
& N \land N \leadsto N           & \qquad \qquad  & N \lor N \leadsto N  & \\
& \false \land N \leadsto \false &                & \false \lor N \leadsto N \\
& \true \land N \leadsto N       &                & \true \lor N \leadsto \true \\
& \hdg \eql \hdg \leadsto \true   &                & \epsilon \inC \rh  \leadsto   \true, \text{ if } \epsilon\in\regl{\rh}
\vspace{-0.5cm}
\end{alignat*}
\noindent\rule{0.9\textwidth}{0.4pt}\rule{0.1\textwidth}{4pt}
\subsubsection*{Failure Rules.} 
The first two rules perform occurrence check, rules {\sf (F3)} and {\sf (F5)} detect function symbol clash, and rules {\sf (F4)}, {\sf (F6)}, {\sf (F7)}  detect inconsistent primitive constraints. We denote the set of rules {\sf (F1)}--{\sf (F7)} by {\sf Fail}.
\vskip\medskipamount
\leaders\vrule width \textwidth \vskip0.4pt 
\leaders\vrule width 0.1\textwidth \vskip4pt 
\nointerlineskip
\begin{alignat*}{7}
{\sf (F1)} &\quad & & \Iv \eqls (\hdg_1,F(\hdg),\hdg_2)  \leadsto \false, \text{ if $\Iv \in \var (\hdg)$.}\\
{\sf (F2)} &\quad & & \Sv \eqls (\hdg_1,t,\hdg_2)  \leadsto  \false,   \text{ if $\Sv  \in  \var(H_1,t,H_2)$.}\\
{\sf (F3)} &\quad & & f_1(\hdg_1)\eqls f_2(\hdg_2)  \leadsto   \false, \text{ if $f_1 \not\myequiv f_2$.} \\
 {\sf (F4)} &\quad & & \epsilon \eqls (\hdg_1,t,\hdg_2) \leadsto \false. \\
{\sf (F5)}  &\quad & & f_1(\hdg) \inC f_2(\rh) \leadsto  \false,  \text{ if $f_1 \not\myequiv f_2$.} \\
{\sf (F6)} &\quad & & \epsilon \inC \rh  \leadsto  \false, \text{ if $\epsilon \not \in \regl{\rh}$}.\\
{\sf (F7)} &\quad & & (\hdg_1,t, \hdg_2) \inC \eps  \leadsto  \false. 
\vspace{-0.5cm}
\end{alignat*} 
\noindent\rule{0.9\textwidth}{0.4pt}\rule{0.1\textwidth}{4pt}

\subsubsection*{Decomposition Rules.} 
The set of these rules is denoted by {\sf Dec}. They operate on a conjunction of literals and give back either a conjunction of literals again, or a constraint in DNF. 

\vskip\medskipamount
\leaders\vrule width \textwidth \vskip0.4pt 
\leaders\vrule width 0.1\textwidth \vskip4pt 
\nointerlineskip
\begin{alignat*}{7}
{\sf (D1)} &\quad & &  \fu(\hdg)\eqls \fu(\tseq)  \land \conjc  
    \leadsto  \bigvee_{\tseq' \in perm(\tseq)}  \bigl(\hdg \eql \tseq'  \land \conjc\bigr ),\\
   & & & \text{where $\hdg$ and $\tseq$ are disjoint.}\\
{\sf (D2)} &\quad & &  (t_1,\hdg_1)  \eqls (t_2, \hdg_2) \leadsto  t_1 \eql t_2 \land \hdg_1 \eql \hdg_2, \text{ where } \hdg_1\neq \epsilon \text{ or } \hdg_2\neq \epsilon.
\vspace{-0.5cm}
\end{alignat*} 
\noindent\rule{0.9\textwidth}{0.4pt}\rule{0.1\textwidth}{4pt}
\subsubsection*{Deletion Rules.} 
These rules  delete identical terms or hedge variables from both sides of an equation. 
We denote this set of rules by {\sf Del}.
\vskip\medskipamount
\leaders\vrule width \textwidth \vskip0.4pt 
\leaders\vrule width 0.1\textwidth \vskip4pt 
\nointerlineskip
\begin{alignat*}{7}
{\sf (Del1)} &\quad & & (\Sv,\hdg_1)\eqls (\Sv,\hdg_2) \leadsto  \hdg_1\eql \hdg_2.\\
{\sf (Del2)} &\quad & & \fu(\hdg_1,h,\hdg_2)\eqls \fu(\hdg_3,h,\hdg_4)  \leadsto \fu(\hdg_1,\hdg_2)\eql \fu(\hdg_3,\hdg_4).\\
{\sf (Del3)} &\quad & &  \Sv\eqls (\hdg_1,\Sv,\hdg_2)   \leadsto \hdg_1\eql \epsilon \land \hdg_2\eql \epsilon, \text{ if } \hdg_1\neq \epsilon.
\vspace{-0.5cm}
\end{alignat*} 
\noindent\rule{0.9\textwidth}{0.4pt}\rule{0.1\textwidth}{4pt}
\subsubsection*{Variable Elimination Rules.} 
These rules eliminate variables from the given constraint keeping only a solved equation for them. They apply to disjuncts. 
The first two rules replace a variable with the corresponding expression, provided that the occurrence check fails:
\vskip\medskipamount
\leaders\vrule width \textwidth \vskip0.4pt 
\leaders\vrule width 0.1\textwidth \vskip4pt 
\nointerlineskip
\begin{alignat*}{7}
{\sf (E1)} &\quad & & \Iv\eqls t  \land \conjc  \leadsto
    \Iv\eql t\land \conjc \vartheta,\\
& & &\parbox{0.8\linewidth}{where $\Iv \not \in \var(t)$, $\Iv \in \var(\conjc)$ and $\vartheta =\{\Iv \mapsto t \}$. If $t$ is a variable then in addition it is required that $t \in \var(\conjc)$.}\\
{\sf (E2)} &\quad & & \Sv\eqls \hdg\land  \conjc   \leadsto \Sv\eql \hdg\land  \conjc \vartheta,\\
& & &\parbox{0.8\linewidth}{where  $\Sv \not \in \var(\hdg)$, $\Sv  \in \var(\conjc)$, and $\vartheta =\{\Sv \mapsto \hdg \}$. If $\hdg \myequiv \Svy$ for some $\Svy$, then in addition it is required that $\Svy  \in \var(\conjc)$.}
\vspace{-0.5cm}
\end{alignat*} 
\noindent\rule{0.9\textwidth}{0.4pt}\rule{0.1\textwidth}{4pt}

The next two rules {\sf (E3)} and {\sf (E4)} assign to a variable an initial part of the hedge in the other side of the selected equation. The hedge has to be a sequence of terms $T$ in the first rule. The disjunction in the rule is over all possible splits of $T$. In the second rule, only a split of the prefix $T$ of the hedge is relevant and the disjunction is over all such possible splits of $T$. The rest is blocked by the term $t$ due to occurrence check: No instantiation of $\Sv$ can contain it.

\vskip\medskipamount
\leaders\vrule width \textwidth \vskip0.4pt 
\leaders\vrule width 0.1\textwidth \vskip4pt 
\nointerlineskip
\begin{alignat*}{7}
{\sf (E3)} &\ && (\Sv,\hdg)\eqls \tseq\land  \conjc  \leadsto\quad \bigvee_{\tseq=(\tseq_1,\tseq_2)} \Bigl (\Sv \eql \tseq_1 \land \hdg \vartheta \eql \tseq_2 \land \conjc\vartheta \Bigl ),\\
& & &\text{where $\Sv\not\in\var (\tseq)$, $\vartheta=\{\Sv \mapsto \tseq_1\}$, and $\hdg\neq \epsilon$.}\\
{\sf (E4)} &\  & & (\Sv,\hdg_1)\eqls (\tseq,t,\hdg_2)\land  \conjc  \leadsto 
           \bigvee_{\tseq=(\tseq_1,\tseq_2)}\, \Bigl (\Sv \eql \tseq_1 \land  \hdg_1 \vartheta \eql (\tseq_2,t,\hdg_2) \vartheta \land \conjc\vartheta \Bigr )\\
& & & \text{where $\Sv\not\in\var (\tseq)$, $\Sv\in\var ( t)$, $\vartheta=\{\Sv \mapsto \tseq_1\}$, and $\hdg_1\neq \epsilon$.}
\vspace{-0.5cm}
\end{alignat*} 
\noindent\rule{0.9\textwidth}{0.4pt}\rule{0.1\textwidth}{4pt}

Finally, there are three rules for function variable elimination. Their behavior is standard:
   
\vskip\medskipamount
\leaders\vrule width \textwidth \vskip0.4pt 
\leaders\vrule width 0.1\textwidth \vskip4pt 
\nointerlineskip   
\begin{alignat*}{7}
 {\sf (E5)} &\quad & & \Fv\eqls \Gv \land \conjc \leadsto   \Fv\eql\Gv  \land \conjc \vartheta,  \\
            &      & & \parbox{0.8\linewidth}{ where $\Fv\not\myequiv \Gv$, $\Fv\in \var(\conjc)$, and $\vartheta= \{\Fv \mapsto \Gv \}$. If $\Gv$ is a function variable, then in addition it is required that $\Gv \in \var(\conjc)$.}\\
{\sf (E6)} &\quad && \Fv(\hdg_1 )\eqls \Gv(\hdg_2) \land \conjc \leadsto  \Fv\eql\Gv \land \Gv(\hdg_1)\vartheta \eql \Gv(\hdg_2)\vartheta  \land \conjc \vartheta. \\
           &      && \text{where  $\Fv\not\myequiv \Gv$, $\vartheta=\{\Fv \mapsto \Gv \}$, and $\hdg_1 \neq \epsilon$ or $\hdg_2 \neq \epsilon$.}\\
{\sf (E7)} &\quad & & \Fv(\hdg_1 )\eqls \Fv(\hdg_2) \land \conjc \leadsto \bigvee_{f\in\Fs} \Bigl(\Fv\eql f \land f(\hdg_1)\vartheta \eql f(\hdg_2)\vartheta  \land \conjc \vartheta \Bigr),\\ 
      & & &\text{where  $\vartheta=\{\Fv \mapsto f \}$, and $\hdg_1 \neq \hdg_2 $.}
\vspace{-0.5cm}
\end{alignat*} 
\noindent\rule{0.9\textwidth}{0.4pt}\rule{0.1\textwidth}{4pt}

We denote the set of rules {\sf (E1)}--{\sf (E7)} by {\sf Elim}. Note that the assumption of finiteness of $\Fs$ guarantees that the disjunction in {\sf (E7)} is finite. 
\subsubsection*{Membership Rules.} 
The membership rules apply to disjuncts of constraints in DNF, to preserve the DNF structure. They provide the membership check, if the hedge $\hdg$ in the membership atom $\hdg \inC \rh$ is ground. Nonground hedges require more special treatment as one can see.

To solve membership constraints for hedges of the form $(t,H)$ with $t$ a term, we rely on the possibility to compute the linear form of a regular expression, that is, to express it as a finite sum of concatenations of regular hedge expressions that identify all plausible membership constraints for $t$ and $H$.
Formally, the \emph{linear form} of a regular expression $\rh$, denoted $\lf(\rh)$, is a finite set of pairs  $(f(\rh_1), \rh_2)$, % called \emph{monomials}, 
which is defined recursively as follows:
\begin{alignat*}{5}
 &    \lf(\eps) = {}  \emptyset.\\
 &    \lf(f(\rh)) = {}  \{( f(\rh), \eps ) \}. \\
 &    \lf(\rh_1 \chs \rh_2) = {}  \lf(\rh_1) \cup \lf(\rh_2).\\
 &    \lf(\rh_1\concs\rh_2) = {}  \lf(\rh_1) \odot \rh_2,  \text{ if } \epsilon\notin\regl{\rh_1}.\\
 &    \lf(\rh_1\concs\rh_2) = {}  \lf(\rh_1) \odot \rh_2 \cup\lf(\rh_2),  \text{ if } \epsilon\in\regl{\rh_1}.\\
 &    \lf(\rh^\strs) =   \lf(\rh) \odot \rh^\strs.  
\end{alignat*}
These equations involve an extension of concatenation $\odot$ that acts on a linear form and a regular expression and returns a linear form. It is defined as $l\odot \eps =  l,$ and $l\odot \rh  =  \{( f(\rh_1), \rh_2\concs\rh ) \mid ( f(\rh_1), \rh_2  )\in l,\rh_2\neq \eps\}\, \cup \{( f(\rh_1), \rh ) \mid ( f(\rh_1), \eps  )\in l\}$, if $\rh\neq \eps.$

The linear form $\lf(\rh)$ of a regular expression $\rh$ has the property~\cite{DBLP:journals/tcs/Antimirov96}:\footnote{In \cite{DBLP:journals/tcs/Antimirov96}, this property has been formulated for word regular expressions, but it straightforwardly extends to regular hedge expressions we use in this paper.} 
\begin{equation}\label{eq:lf}
 \regl{\rh}\setminus \{\epsilon\}= \bigcup_{(f(\rh_1),\rh_2)\in \lf(\rh)}\regl{f(\rh_1)\concs \rh_2},
\tag{\sc{lf}}
\end{equation}
which justifies its use in the rule {\sf M2} below. 

The first group of membership rules looks as follows:

\vskip\medskipamount
\leaders\vrule width \textwidth \vskip0.4pt 
\leaders\vrule width 0.1\textwidth \vskip4pt 
\nointerlineskip   
\begin{alignat*}{7}
{\sf (M1)} &\quad &   &  (\Sv_1,\ldots,\Sv_n) \inC \eps  \land \conjc \leadsto  \land_{i=1}^n\ \Sv_i \eql \epsilon \land  \conjc \vartheta, \\
 &      &   & \text{where }  \vartheta =\{\Sv_1 \mapsto \epsilon,\ldots,\Sv_n \mapsto \epsilon\}, n>0.\\
{\sf (M2)} &\quad &   & (t,\hdg) \inC  \rh\land \conjc  \leadsto \bigvee_{(f(\rh_1),\rh_2)\in \lf(\rh)} \Bigl(t \inC f(\rh_1) \land \hdg \inC \rh_2\land \conjc\Bigr),\\
           &      &   & \text{where $\hdg\neq\epsilon$ and $\rh \neq\eps $.}\\
{\sf (M3)} &\quad &   &(\Sv,\hdg) \inC f(\rh) \land \conjc   \leadsto  \\
           &      &   & \qquad  \Bigl( \Sv \inC f(\rh) \land  \hdg \eql \epsilon  \land \conjc \Bigr) \lor  \Bigl(\Sv \eql \epsilon  \land  \hdg \inC f(\rh)\land \conjc \Bigr),\\
           &      &   & \text{where $\hdg\neq\epsilon$.}\\
{\sf (M4)} &\quad &   & t \inC  \rh^ \strs  \leadsto  t \inC  \rh. \\
{\sf (M5)} &\quad &   & t \inC  \rh_1 \concs \rh_2\land \conjc  \leadsto % \\
           %&      &   & 
           %\qquad
            \Bigl(t \inC  \rh_1 \land \epsilon \inC \rh_2\land \conjc\Bigr) \lor  \Bigl(\epsilon \inC  \rh_1 \land t \inC \rh_2\land \conjc \Bigr). \\
{\sf (M6)} &\quad &   & t \inC  \rh_1 \chs \rh_2\land \conjc  \leadsto \Bigl(t \inC  \rh_1\land \conjc \Bigr) \lor  \Bigl( t \inC \rh_2\land \conjc \Bigr). \\
{\sf (M7)} &\quad &   & (\Sv,\hdg) \inC  \rh_1 \chs \rh_2\land \conjc  \leadsto  \Bigl( (\Sv,\hdg) \inC  \rh_1\land \conjc \Bigr)  \lor  \Bigl( (\Sv,\hdg) \inC \rh_2\land \conjc \Bigr). \\
{\sf \hphantom{1}(M8)} &\quad & & v \inC \rh_1 \land  v \inC \rh_2   \leadsto {} v \inC \rh,\\ 
& & & \text{ where } v\in \cVi\cup \cVs,  \regl{\rh} =  \regl{\rh_1}\cap \regl{\rh_2}, \text{ and neither } v \inC \rh_1 \text{ nor }  v \inC \rh_2 \\
 & & &  \text{can be transformed by the other rules}.
\vspace{-0.5cm}
\end{alignat*} 
\noindent\rule{0.9\textwidth}{0.4pt}\rule{0.1\textwidth}{4pt}

Next, we have rules which constrain singleton hedges to be in a term language. They proceed by the straightforward matching or decomposition of the structure. Note that in {\sf (M12)}, we require the arguments of the unordered function symbol to be terms. {\sf (M10)} and {\sf (M9)} do not distinguish whether $f$ is ordered or unordered:
\vskip\medskipamount
\leaders\vrule width \textwidth \vskip0.4pt 
\leaders\vrule width 0.1\textwidth \vskip4pt 
\nointerlineskip 
\begin{alignat*}{7}
{\sf (M9)} &\quad & & \Sv \inC f(\rh) \land \conjc  \leadsto {} \Sv \eql \Iv \land \Iv \inC f(\rh) \land \conjc\{\Sv\mapsto \Iv\}, \text{where $\Iv$ is fresh.}\\
{\sf (M10)} &\quad & &\Fv(\hdg) \inC f(\rh) \land \conjc    \leadsto  \Fv \eql f \land  f(\hdg) \{\Fv \mapsto f\} \inC f(\rh) \land   \conjc \{\Fv \mapsto f\}. \\ 
{\sf (M11)} &\quad & &\fo(\hdg) \inC \fo(\rh)     \leadsto {}   \hdg \inC \rh.\\
{\sf (M12)} &\quad & & \fu(\tseq) \inC \fu(\rh)\land \conjc    \leadsto {}  \bigvee_{\tseq' \in \perm( \tseq)}  \Bigl( \tseq' \inC \rh \land \conjc \Bigr).
\vspace{-0.5cm}
\end{alignat*} 
\noindent\rule{0.9\textwidth}{0.4pt}\rule{0.1\textwidth}{4pt}

We denote the set of rules {\sf (M1)}--{\sf (M12)} by {\sf Memb}.

\subsection{The Constraint Solving Algorithm}
\label{sect:algo}

In this section we present an algorithm that converts a constraint with respect to the rules specified in Section~\ref{subsect:rules} into a partially solved one. First, we define the rewrite step
\begin{equation*}
\label{step}        
        \step := \text{\sf first(Log, Fail, Del, Dec, Elim, Memb).}
 \end{equation*}

When applied to a constraint, $\step$ transforms it by the \emph{first} applicable rule of the solver, looking successively into the sets {\sf Log}, {\sf Fail}, {\sf Del}, {\sf Dec}, {\sf Elim}, and {\sf Memb}. If none of them apply, then the constraint is said to be in a \emph{normal form} with respect to $\step$.

The constraint solving algorithm implements the strategy $\sol$ defined as a repeated application of the rewrite step, aiming at the computation of a normal form with respect to $\step$. But it also makes sure that the constraint, passed to {\sf step}, is in DNF: 
\begin{equation*}
\sol:={\sf compose(dnf, NF(step)).}
\end{equation*}

Hence, $\sol$ takes a quantifier-free constraint, transforms it into its equivalent constraint in DNF (the strategy {\sf dnf} in the definition stands for the algorithm that does it), and then repeatedly applies $\step$ to the obtained constraint in DNF as long as possible. It remains to show that this definition yields an algorithm, which amounts to proving that the strategy {\sf NF(step)} indeed produces a constraint to which none of the rules from {\sf Log}, {\sf Fail}, {\sf Del}, {\sf Dec}, {\sf Elim}, and {\sf Memb} apply. The termination theorem states exactly this:

\begin{restatable}[Termination of $\sol$]{theorem}{twoThmTermination}
\label{thm:termination}
$\sol$ terminates on any quantifier-free constraint.
\end{restatable}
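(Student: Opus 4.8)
The plan is to reduce the claim to the termination of $\mathsf{NF(step)}$ on constraints that are already in DNF. The strategy $\mathsf{dnf}$ is a standard, terminating propositional transformation, so $\sol=\mathsf{compose(dnf,NF(step))}$ terminates if and only if no infinite sequence of $\step$-rewrites can start from a DNF constraint. Since each $\step$ applies exactly one solver rule (the first applicable one), and the particular order in which $\mathsf{first}$ selects a rule is irrelevant for termination, it suffices to exhibit a well-founded order on DNF constraints that is strictly decreased by \emph{every} individual rule of ${\sf Log}\cup{\sf Fail}\cup{\sf Del}\cup{\sf Dec}\cup{\sf Elim}\cup{\sf Memb}$.

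I would measure a single conjunct $\conjc$ by a lexicographic tuple $c(\conjc)=(c_1,c_2,c_3,c_4)$ and a whole DNF constraint by the finite multiset $\mset{c(\conjc)\mid \conjc\text{ a disjunct}}$, compared by the Dershowitz--Manna multiset extension. Each $c_i$ is a natural number or a finite multiset of naturals; the lexicographic product of these, and the multiset extension over disjuncts, is well founded. The components would be:
\begin{itemize}
\item $c_1$: the number of unsolved variables of $\conjc$, counting each unsolved \emph{hedge} variable with weight $2$ and each unsolved term or function variable with weight $1$;
\item $c_2$: the pair formed by the multiset $\mset{\size(\hdg)\mid (\hdg\inC\rh)\in\conjc,\ \hdg\neq\epsilon}$ of the \emph{sizes} (total symbol occurrences) of the non-empty hedges in membership atoms, followed by the total regular-expression size $\sum\size(\rh)$ over all membership atoms of $\conjc$;
\item $c_3$: the multiset of sizes of the two sides of all equations of $\conjc$;
\item $c_4$: the total number of literals and connectives in $\conjc$.
\end{itemize}

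The verification then splits along the rule groups. The substitution-applying rules --- all of ${\sf Elim}$ together with {\sf (M1)}, {\sf (M9)}, {\sf (M10)} --- strictly decrease $c_1$: each turns a previously unsolved variable into a solved one without introducing a fresh unsolved variable of equal-or-greater weight (crucially, {\sf (M9)} replaces an unsolved hedge variable $\Sv$ by a fresh \emph{term} variable, which lowers the weighted count even when that term variable stays unsolved). One checks that applying these substitutions never unsolves an already solved variable, so $c_1$ is non-increasing under \emph{every} rule. The remaining membership rules {\sf (M2)}--{\sf (M8)}, {\sf (M11)}, {\sf (M12)} keep $c_1$ fixed and strictly decrease $c_2$: the rules that peel off the first element of a hedge or an outer function symbol ({\sf (M2)}, {\sf (M3)}, {\sf (M11)}, {\sf (M12)}) replace a hedge of size $n$ by hedges of strictly smaller size; the purely regular rules ({\sf (M4)}--{\sf (M7)}) keep the size-multiset fixed and shrink the regular-expression size; and {\sf (M8)} merges two membership atoms into one, deleting an element of the size-multiset. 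The rules of ${\sf Dec}\cup{\sf Del}$ leave $c_1,c_2$ fixed and decrease $c_3$ by splitting or shortening an equation. Finally the ${\sf Log}$ and ${\sf Fail}$ rules either collapse a literal/conjunct to $\false$ or $\true$ or delete a duplicate/neutral element, leaving the higher components non-increasing and decreasing whichever of $c_2,c_3,c_4$ matches the removed literal, while the disjunction-level ${\sf Log}$ rules simply drop a disjunct. For the rules that split one conjunct into a disjunction (e.g. {\sf (D1)}, {\sf (E3)}, {\sf (E7)}, {\sf (M2)}, {\sf (M12)}), each resulting disjunct has strictly smaller $c$, so the top-level multiset decreases in every case.

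The delicate point --- and the one I would treat most carefully --- is arranging the components so that a \emph{single} measure decreases under all rules at once. Two tensions drive the design. First, substitution rules can blow up syntactic size, so no size-based quantity may sit at the top; putting the weighted variable count first neutralises this, but then the only fresh-variable rule {\sf (M9)} must still pay for its new variable, which is exactly what forces the heavier weight on hedge variables. Second, {\sf (M2)} invokes the linear form $\lf(\rh)$, whose construction may \emph{enlarge} the regular expressions; this is harmless only because {\sf (M2)} strictly shortens the hedge, so the hedge-size multiset must be placed \emph{above} the regular-expression size inside $c_2$, and hedge \emph{size} (not length) must be used so that the term-unwrapping rules {\sf (M11)}, {\sf (M12)} also register as progress. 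A final minor care is that {\sf (M3)}, {\sf (M5)} create transient atoms $\epsilon\inC\rh$ and equations $\hdg\eql\epsilon$; excluding empty hedges from the $c_2$-multiset and ordering $c_3$ below $c_2$ prevents these from spoiling the decrease. Once the measure is fixed, checking the individual rules is routine, and well-foundedness of the order yields termination of $\mathsf{NF(step)}$, hence of $\sol$.
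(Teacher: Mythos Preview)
Your approach is essentially the same as the paper's: both define a multiset (over disjuncts) of lexicographic tuples whose leading component is the count of unsolved variables, followed by the multiset of hedge sizes in membership atoms, then regular-expression sizes, then equation sizes, and both dispatch the rules by the same grouping (elimination rules drop the first component, the remaining membership rules drop the second or third, and decomposition/deletion rules drop the equation-size component). Your weighted variable count for handling {\sf (M9)} is a neat variant of the paper's separate component $N_2$ (the number of atoms of the form $\Sv\inC\rh$, placed between the hedge-size and regex-size components), and the extra tiebreaker $c_4$ is not present in the paper, but otherwise the argument and its delicate points coincide.
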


With the next two statements we show that the solver reduces a constraint to its equivalent constraint:
\begin{restatable}{lemma}{threeLemEquiv}
\label{lem:equivalence}
If $\step(\constr)=\constrd$, then $ \istruct \models \forall \Bigl(\constr \lra \overline{\exists}_{\var(\constr)}\constrd\Bigr)$ for all intended structures $\istruct$. 
\end{restatable}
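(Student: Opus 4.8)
The plan is to exploit the fact that $\step$ rewrites $\constr$ by applying a \emph{single} rule---the first applicable one among {\sf Log}, {\sf Fail}, {\sf Del}, {\sf Dec}, {\sf Elim}, {\sf Memb}---so that $\constrd$ is obtained from $\constr$ by one replacement of a subformula. Since logical equivalence over a fixed structure is a congruence with respect to $\land$ and $\lor$, and since (for the one rule that introduces a new variable) that variable occurs nowhere else and so its $\exists$ can be hoisted out of the surrounding DNF context without capture, it suffices to establish the claim \emph{locally}, for the left- and right-hand sides of each rule. Writing a rule as $\ell \leadsto r$ (acting on a literal or on a full conjunct $L\land\conjc$), I would prove $\istruct\models\forall(\ell\lra\overline{\exists}_{\var(\ell)}r)$ for every intended structure $\istruct$ and then invoke the replacement theorem for equivalence to conclude $\istruct\models\forall(\constr\lra\overline{\exists}_{\var(\constr)}\constrd)$. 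For every rule except {\sf (M9)} no new variable is introduced, so $\var(r)\subseteq\var(\ell)$ and $\overline{\exists}_{\var(\ell)}r$ is just $r$; only {\sf (M9)} contributes a genuine existential, which is exactly what $\overline{\exists}_{\var(\constr)}$ absorbs.

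The per-rule checks rest on a short toolbox of facts about intended structures. The \emph{substitution lemma} --- for any expression $e$ and substitution $\vartheta$, $\regl{e\vartheta}_{\istruct,\sigma}=\regl{e}_{\istruct,\sigma'}$, where $\sigma'$ is $\sigma$ updated to send each $v$ in the domain of $\vartheta$ to $\regl{\vartheta(v)}_{\istruct,\sigma}$ --- handles the variable-elimination rules {\sf (E1)}--{\sf (E7)}, since replacing a variable by the side of its solved equation leaves the interpretation of the remaining constraint unchanged. The \emph{term-algebra nature} of $\istruct$ (distinct symbols give distinct values, $\epsilon$ differs from any nonempty hedge, and nothing is a proper subterm of itself) gives the failure rules {\sf (F1)}--{\sf (F7)}, the latter combined with the interpretation of regular expressions. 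The \emph{permutation invariance} $\sint(\fu)(s)=\sint(\fu)(s')$ for $s'\in\perm(s)$ of unordered symbols, together with the fact that this is the only identification they induce, yields the disjunctions over $\perm(\tseq)$ in {\sf (D1)} and {\sf (M12)}. The rules in {\sf Log}, rule {\sf (D2)}, and the deletion rules {\sf (Del1)}--{\sf (Del3)} then follow directly from the compositional clauses defining $\regl{\cdot}_{\istruct,\sigma}$ and from propositional tautologies.

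For the membership block I would unfold the interpretation of regular hedge expressions. The crucial case is {\sf (M2)}: since $(t,\hdg)$ is a nonempty hedge, $\regl{(t,\hdg)}_{\istruct,\sigma}\in\regl{\rh}$ iff it lies in $\regl{\rh}\setminus\{\epsilon\}$, and property~\eqref{eq:lf} rewrites the latter as a finite union over $\lf(\rh)$; because every $f(\rh_1)$ denotes a set of single terms, membership in $\regl{f(\rh_1)\concs\rh_2}$ forces the leading term of $(t,\hdg)$ into $\regl{f(\rh_1)}$ and the remainder $\hdg$ into $\regl{\rh_2}$, which is precisely the disjunct produced by the rule. Rules {\sf (M3)}--{\sf (M8)} follow from the set-theoretic meanings of choice, concatenation, star, and (for {\sf (M8)}) intersection; {\sf (M10)}--{\sf (M12)} use the single-term structure of $\regl{f(\rh)}$ and the matching clauses for $\sint(f)$. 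Rule {\sf (M9)} is the sole place where the right-hand side carries the fresh $\Iv$: here one argues that any hedge satisfying $\Sv\inC f(\rh)$ is in fact a single term, hence equals the value of some $\Iv$, which establishes the equivalence with the existential on the right, in accordance with $\overline{\exists}_{\var(\constr)}\constrd$.

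I expect the main obstacle to be twofold. The heavier bookkeeping is in the membership block: correctly reducing {\sf (M2)} via \eqref{eq:lf} and verifying, case by case, the side conditions ($\hdg\neq\epsilon$, $\rh\neq\eps$, and the like) under which each rule fires. The more delicate conceptual point is the handling of function variables in {\sf (E5)}--{\sf (E7)}: one must use that in an intended structure a variable assignment is a substitution, so a function variable effectively ranges over the finite set of interpretations of function symbols. This is what makes the finite disjunction $\bigvee_{f\in\Fs}$ in {\sf (E7)} exhaustive, and, via the abbreviation $\Fv()\eql f()$ together with freeness of the term algebra, lets $\Fv\eql f$ capture that $\Fv$ denotes the symbol $f$.
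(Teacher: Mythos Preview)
Your proposal is correct and follows essentially the same approach as the paper: a case analysis on the single rule applied by $\step$, reducing the global statement to a local equivalence at the rewritten disjunct. The paper's own proof is terser---it only details the cases {\sf (E3)} and {\sf (M2)} and declares the others similar---whereas you organize the per-rule arguments around an explicit toolbox (substitution lemma, freeness of the term algebra, permutation invariance for unordered symbols, the \eqref{eq:lf} identity, and the fresh-variable handling for {\sf (M9)}); this is a reasonable and slightly more informative presentation of the same proof.
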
 
\begin{restatable}{theorem}{fourThmEquiv}
\label{thm:equivalence}
If $\sol(\constr)=\constrd$, then $\istruct \models \forall \Bigl(\constr \lra \overline{\exists}_{\var(\constr)}\constrd\Bigr)$ for all intended structures $\istruct$, and $\constrd$ is either partially solved or the $\false$ constraint. 
\end{restatable}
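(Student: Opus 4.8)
The plan is to prove the two assertions separately, both building on Theorem~\ref{thm:termination}, which guarantees that $\sol(\constr)$ is defined: the strategy first rewrites the quantifier-free $\constr$ into an equivalent DNF constraint $\constr_0$, and then applies $\step$ finitely many times, producing $\constr_0,\constr_1,\ldots,\constr_k=\constrd$ with $\step(\constr_i)=\constr_{i+1}$ and $\constrd$ a normal form in DNF. I would then show (i) the equivalence $\istruct\models\forall(\constr\lra\overline{\exists}_{\var(\constr)}\constrd)$ and (ii) that $\constrd$ is partially solved or $\false$.

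For (i), I first note that the \textsf{dnf} stage preserves logical equivalence and introduces no variables, so $\istruct\models\forall(\constr\lra\constr_0)$ and $\var(\constr_0)=\var(\constr)$. I would then induct on the number $k$ of $\step$-applications, using Lemma~\ref{lem:equivalence} as the inductive step. Writing $U_i:=\var(\constr_i)\setminus\var(\constr_0)$, the hypothesis is $\istruct\models\forall(\constr_0\lra\exists_{U_i}\constr_i)$, and Lemma~\ref{lem:equivalence} gives $\constr_i\lra\exists_{W}\constr_{i+1}$, where $W=\var(\constr_{i+1})\setminus\var(\constr_i)$ collects the variables freshly introduced at that step (as in rule \textsf{M9}). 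Since existential quantification preserves equivalence and nested existentials combine, I obtain $\constr_0\lra\exists_{U_i\cup W}\constr_{i+1}$. The only bookkeeping point is that $U_i\cup W$ need not equal $U_{i+1}:=\var(\constr_{i+1})\setminus\var(\constr_0)$, since the variable-elimination rules may drop members of $U_i$ from $\constr_{i+1}$; but freshness makes $W$ disjoint from $\var(\constr_0)$ and from $U_i$ (avoiding capture), and quantifying over a variable absent from $\constr_{i+1}$ is vacuous, so $\exists_{U_i\cup W}\constr_{i+1}\equiv\exists_{U_{i+1}}\constr_{i+1}$. This closes the induction and yields $\constr_0\lra\overline{\exists}_{\var(\constr_0)}\constrd$, hence the claim for $\constr$.

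For (ii), $\constrd$ is a normal form with respect to $\step$ in DNF. If $\constrd=\false$ there is nothing to prove. Otherwise the \textsf{Log} rules force $\constrd$ to be either $\true$ or a non-empty disjunction $\conjc_1\lor\cdots\lor\conjc_n$ with no $\true$/$\false$ conjuncts, so it suffices to show each $\conjc_j$ is a partially solved conjunction. I would argue by contraposition: if some primitive constraint $\pmc$ of $\conjc_j$ is neither solved nor one of the listed partially-solved forms, then a rule of $\step$ applies to $\conjc_j$, contradicting normality. This is an exhaustive case analysis on the shape of $\pmc$. For an equation the cases split on the leading symbols of the two sides: a symbol clash triggers \textsf{F3}, a shared top symbol triggers \textsf{D1}/\textsf{D2} or \textsf{Del1}--\textsf{Del3}, a leading term variable triggers \textsf{F1} or \textsf{E1} (or is already solved when elimination is blocked), a leading hedge variable against a term sequence triggers \textsf{E3}/\textsf{E4}, and the function-variable cases trigger \textsf{E5}--\textsf{E7}; exactly the three special equational forms survive. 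For a membership atom $\hdg\inC\rh$, the empty-hedge case is removed by \textsf{Log}/\textsf{F6}, and otherwise the analysis on the shapes of $\hdg$ and $\rh$ (using the linear form $\lf(\rh)$ to justify \textsf{M2}) invokes one of \textsf{M1}--\textsf{M12} or \textsf{F5}/\textsf{F7}, with \textsf{M8} combining two memberships on the same variable; exactly the two special membership forms and the solved forms survive.

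The main obstacle is the exhaustiveness of the case analysis in part (ii): one must verify that, under the \textsf{first} strategy, the rules of $\step$ together with the solved/partially-solved definitions leave no gap. In particular I expect the delicate points to be that the side conditions of the variable-elimination rules (the occurrence check and the requirement that the eliminated variable occur elsewhere) coincide precisely with the definition of a solved equation, and that the proviso of \textsf{M8} ("no other rule applies") together with the solved-form conditions on $\rh$ pins down exactly the admissible membership atoms. By contrast, the equivalence argument in part (i) is routine once the fresh-variable bookkeeping is made explicit.
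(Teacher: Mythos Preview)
Your proposal is correct and follows essentially the same approach as the paper: for (i) you induct on the number of $\step$-applications using Lemma~\ref{lem:equivalence}, which is exactly the paper's transitivity argument unfolded, and the fresh-variable bookkeeping you spell out is precisely what the paper condenses into the remark that variables introduced by the solver are fresh not only for the current constraint but for all earlier ones; for (ii) your case analysis by contraposition is a detailed version of what the paper summarizes as ``by inspection of the solver rules.''
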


\section{Operational Semantics of $\clphc$}
\label{sect:operational}

In this section we  describe the operational semantics of $\clphc$, following the approach for the CLP schema given in \cite{DBLP:journals/jlp/JaffarMMS98}. A  \emph{state} is a pair $\lbr G\sep \constr\rbr$, where $G$ is the sequence of literals and $\constr = \conjc_1 \lor \cdots \lor \conjc_n$, where $\conjc$'s are conjunctions of $\true$, $\false$, and primitive constraints. The \emph{definition of an atom $p(t_1,\ldots,t_m)$ in program} $P$, $\defn_{\Pr}(p(t_1,\ldots,t_m))$, is the set of rules in $\Pr$ such that the head of each rule has  a form $p(r_1,\ldots,r_m)$. We assume that $\defn_{\Pr}$ each time returns fresh variants. %We also assume the constraint solver transforms constraints into logically equivalent ones: If $\sol(c)=c'$, then $c$ and $c'$ are logically equivalent.

A state $ \lbr L_1,\ldots, L_n \sep \constr \rbr $  can be \emph{reduced with respect to $P$} as follows: Select a literal $L_i$. Then:
\begin{itemize}
\item If $L_i$ is  a primitive constraint and $\sol(\constr \land L_i)\not=\false$, then it is reduced to $ \lbr L_1,\ldots, L_{i-1},L_{i+1},\ldots, L_n \sep \sol(\constr \land L_i)\rbr$.
\item If $L_i$ is  a primitive constraint and $\sol(\constr \land L_i)=\false$, then it is reduced to $ \lbr \boxmy \sep \false\rbr$.
\item If $L_i$ is an atom $p(t_1,\ldots,t_m)$, then it is reduced to
\begin{align*}
 \lbr L_1,\ldots, L_{i-1},t_1 \doteq r_1,\ldots,t_m \doteq r_m,B,L_{i+1},\ldots, L_n \sep \constr \rbr
\end{align*}
for some $(p(r_1,\ldots,r_m) \myrule B)\in \defn_{\Pr}(L_i)$.
\item If $L_i$ is a atom and $\defn_{\Pr}(L_i)=\emptyset$, then it is reduced to $ \lbr \boxmy \sep \false\rbr$.
\end{itemize}

A \emph{derivation from a state} $S$ in a program $\Pr$ is a finite or infinite sequence of states $S_0 \rightarrowtail S_1 \rightarrowtail \cdots \rightarrowtail S_n \rightarrowtail \cdots$ where $S_0$ is $S$ and there is a reduction from each $S_{i-1}$ to $S_i$, using rules in $\Pr$. A \emph{derivation from a goal} $G$ in a program $\Pr$ is a derivation from $\lbr G \sep \true \rbr $. The \emph{length} of a (finite) derivation of the form  $S_0 \rightarrowtail S_1 \rightarrowtail \cdots \rightarrowtail S_n$ is $n$. A derivation is \emph{finished} if the last goal cannot be reduced, that is, if its last state is of the form $ \lbr \boxmy \sep \constr\rbr$ where $\constr$ is partially solved or $\false$. If $\constr $ is $\false$, the derivation is said to be \emph{failed}. 

Naturally, it is interesting to find syntactic restrictions for programs guaranteeing that non-failed finished derivations produce a solved constraint instead of a partially solved one. In the next two sections we consider such restrictions, leading to well-moded and KIF style $\clphc$ programs that have the desired property.

\section{Well-Moded Programs}
\label{sect:well-moded}

The concept of well-modedness is due to~\cite{DBLP:conf/slp/DembinskiM85}. A \emph{mode} for an $n$-ary predicate symbol $p$ is a function $m_p:\{1,\ldots,n\}\longrightarrow\{\Input, \Output\}$. If $m_p(i)=\Input$ (resp. $m_p(i)=\Output)$ then the position $i$ is called an \emph{input} (resp. \emph{output}) \emph{position} of $p$. The predicates $\inC$ and $\eql$ have only output positions. For a literal $L=p(t_1,\ldots,t_n)$ (where $p$ can be also $\inC$ or $\eql$), we denote by $\invar(L)$ and $\outvar(L)$ the sets of variables occurring in terms in the input and output positions of $p$. 

If a predicate is used with different modes $m^1_p,\ldots, m^k_p$ in the program, we may consider each $p_{m^i_p}$ as a separate predicate. Therefore, we can assume without loss of generality that every predicate has exactly one mode (cf., e.g., \cite{DBLP:conf/ctrs/GanzingerW92}).

An \emph{extended literal} $E$ is either a literal, $\true$, or $\false$. We define $\invar(\true):=\emptyset$, $\outvar(\true):=\emptyset$, $\invar(\false):=\emptyset$, and $\outvar(\false):=\emptyset$.

A \emph{sequence of extended literals $E_1,\ldots, E_n$ is well-moded}  if the following hold:
\begin{enumerate}
\item For all $1\le i \le n$, $\invar(E_i) \subseteq \bigcup^{i-1}_{j=1}\outvar(E_j)$.
\item If for some $1\le i \le n$, $E_i$ is $t_1 \eql t_2$, then $\var(t_1) \subseteq \bigcup^{i-1}_{j=1}\outvar(E_j)$ or $\var(t_2) \subseteq \bigcup^{i-1}_{j=1}\outvar(E_j)$.
\item If for some $1\le i \le n$, $E_i$ is a membership atom, then the inclusion $\var(E_i) \subseteq \bigcup^{i-1}_{j=1}\outvar(E_j)$ holds.
\end{enumerate}

A \emph{conjunction  of extended literals $G$ is well-moded} if there exists a well-moded sequence of extended literals $E_1,\ldots,E_n$ such that $G=\bigwedge_{i=1}^n E_i$ modulo associativity and commutativity of conjunction.
A \emph{formula in DNF is well-moded} if each of its disjuncts is. A \emph{state $\langle L_1,\ldots, L_n \sep \conjc_1\lor \cdots \lor \conjc_m \rangle $ is well-moded}, where  $\conjc$'s are  conjunctions of $\true$, $\false$, and primitive constraints, if the formula $(L_1 \land\cdots \land L_n\land \conjc_1) \lor \cdots \lor (L_1 \land\cdots \land L_n \land \conjc_m)$ is well-moded.

A \emph{clause $A\myrule L_1,\ldots, L_n$ is  well-moded} if the following hold:
\begin{enumerate}
 \item \label{wel} For all $1\le i \le n$, $\invar(L_i)  \subseteq \bigcup^{i-1}_{j=1}\outvar(L_j)\cup \invar(A)$.
 \item $\outvar(A)  \subseteq \bigcup^{n}_{j=1}\outvar(L_j)\cup \invar(A)$.
 \item If for some $1\le i \le n$, $L_i$ is $t_1 \eql t_2$, then $\var(t_1)\subseteq \bigcup^{i-1}_{j=1}\outvar(L_j)\cup \invar(A)$ or $\var(t_2)\subseteq \bigcup^{i-1}_{j=1}\outvar(L_j)\cup \invar(A)$.
  \item  If for some $1\le i \le n$, $L_i$ is a membership atom, then  $\outvar(L_i) \subseteq \bigcup^{i-1}_{j=1}\outvar(L_j)$ $\cup$ $\invar(A)$.
\end{enumerate}

A \emph{program is well-moded} if all its clauses are well-moded. 
\begin{example}
In Example~\ref{exmp:rewriting}, if in the user-defined binary predicates $\mathit{rewrite}$ and $\mathit{rule}$ the first argument is the input position and the second argument is the output position, then it is easy to see that the program is well-moded. In Example~\ref{exmp:rpo}, for well-modedness we need to define both positions in the user-defined predicates to be the input ones.
\end{example}

In the rest of this section we investigate the behavior of well-moded programs. Before going into the details, we briefly summarize two main results:

\begin{itemize}
 \item The solver can completely solve satisfiable well-moded constraints (instead of partial solutions computed in the general case). See Theorem~\ref{lem:solvedform}.
 \item Any finished derivation from a well-moded goal with respect to a well-moded program either ends with a completely solved constraint, or fails. See Theorem~\ref{thm:finalstate}.
\end{itemize}

To prove these statements, some technical lemmas are needed. 
\begin{restatable}{lemma}{fiveLemWellModedness}
  \label{lem:well:modedness:1}
  Let $v \eql e$ be an equation, where $v$ is a variable and $e$ is the corresponding expression such that $v$ does not occur in $e$. Let $\conjc_1$ and $\conjc_2$ be two arbitrary (possibly empty) conjunctions of extended literals such that the conjunction $\conjc_1\land \conjc_2 \land v \eql e$ is well-moded.  Let $\substb=\{v\mapsto e\}$ be a substitution. Then $\conjc_1\land \conjc_2\substb \land v \eql e$ is also well-moded.
\end{restatable}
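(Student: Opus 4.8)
The plan is to reduce everything to the combinatorial characterisation of well-modedness: a conjunction is well-moded iff its extended literals admit an ordering $E_1,\dots,E_n$ satisfying conditions (1)--(3). So I start from a fixed well-moded ordering $\seq{E}=E_1,\dots,E_n$ of the literals of $\conjc_1\land\conjc_2\land v\eql e$, in which the equation $v\eql e$ occupies some position $k$; write $O_i:=\bigcup_{j\le i}\outvar(E_j)$ for the running set of produced variables. The candidate ordering for $\conjc_1\land\conjc_2\substb\land v\eql e$ is obtained by applying $\substb$ to exactly the literals coming from $\conjc_2$ (leaving the $\conjc_1$-literals and the equation untouched) and then repositioning the single equation $v\eql e$. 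I write $F_l$ for the substituted literals and $O'_l$ for their running output sets; the extended literals $\true,\false$ contribute nothing and are ignored.

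The technical heart is a monotonicity claim describing how $\substb=\{v\mapsto e\}$ distorts the running output sets. Since $\substb$ touches only $v$ and $v\notin\var(e)$, replacing $v$ by $e$ in an output position of a $\conjc_2$-literal deletes $v$ from that literal's output set and inserts $\var(e)$, while leaving every other output variable intact. From this I obtain, for every prefix, $O_i\setminus\{v\}\subseteq O'_i$, together with the key dichotomy: if $v\in O_i$, then either $v\in O'_i$ (the producer of $v$ was a $\conjc_1$-literal or the equation, both unchanged) or $\var(e)\subseteq O'_i$ (the producer was a $\conjc_2$-literal, which after substitution now yields $\var(e)$). The upshot is that the only way the substituted ordering can fail conditions (1)--(3) is at a literal that still demands $v$ as an input, namely a $\conjc_1$-literal or the equation, and at exactly such a point $\var(e)$ is already available.

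This is what makes repositioning the equation work. Temporarily delete $v\eql e$ and let $k'$ be the least index at which $v$ or $\var(e)$ has been produced in the substituted, equation-free sequence. Condition (2) applied to $v\eql e$ in $\seq{E}$ guarantees $v\in O_{k-1}$ or $\var(e)\subseteq O_{k-1}$, and the monotonicity claim transports this to the primed sets, so $k'$ exists. I then insert $v\eql e$ immediately after position $k'$. For the equation itself $\invar(v\eql e)=\emptyset$, and its condition (2) holds by the very choice of $k'$; moreover $v\eql e$ now produces $\{v\}\cup\var(e)$, so every later literal that needed $v$, and every substituted $\conjc_2$-literal that now needs $\var(e)$, is served. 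For the prefix $F_1,\dots,F_{k'}$ I use minimality of $k'$: if some prefix literal still required $v$, or a substituted $\conjc_2$-literal required $\var(e)$, then by the dichotomy $v$ or $\var(e)$ would already be available strictly before $k'$, contradicting minimality; hence the prefix is well-moded on its own. Conditions (2) and (3) for the remaining literals follow because the available set never shrinks once $\{v\}\cup\var(e)$ is supplied, using $O_{i-1}\subseteq O'_{i-1}\cup\{v\}\cup\var(e)$ and rewriting the available side of each literal by $v\mapsto e$.

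I expect the main obstacle to be precisely this minimality argument for the prefix: one must rule out that a substituted $\conjc_2$-literal sitting before the chosen insertion point still demands $\var(e)$, since there would be no producer for $\var(e)$ there. The dichotomy from the monotonicity claim, combined with the definition of $k'$ as a first occurrence, is exactly the tool that excludes this. The remaining routine difficulty is the bookkeeping of positions when the equation is deleted and reinserted, together with verifying the equation-specific clause (2) and the membership clause (3) under the rewriting $v\mapsto e$ of the input and output variable sets, where again $v\notin\var(e)$ prevents $\var(e)$ from secretly reintroducing $v$.
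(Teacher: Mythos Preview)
Your argument is correct and complete; the dichotomy-plus-minimality machinery does exactly what is needed, and you have correctly identified the delicate point (prefix literals whose substituted input now mentions $\var(e)$) and dispatched it. The paper, however, takes a shorter path. Instead of substituting first and then locating an insertion point $k'$ in the substituted sequence via a first-occurrence argument, the paper works entirely in the \emph{original} well-moded sequence and makes a two-case split on whether $v\eql e$ is itself the leftmost literal containing $v$. In Case~1 the equation stays where it is (and $\var(e)$ is already produced to its left, so the tail can be safely substituted); in Case~2 the paper slides $v\eql e$ to sit immediately after the leftmost literal $L$ with $v$ in an output position, observes that this reordering is still well-moded, and only then applies $\substb$ to the $\conjc_2$-literals. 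Your approach trades this case split for the monotonicity claim $O_i\setminus\{v\}\subseteq O'_i$ and the minimality of $k'$, which makes the argument more uniform but also longer; the paper's version avoids reasoning about running outputs in the substituted sequence altogether and never needs the dichotomy, at the cost of treating the two cases separately.
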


The next lemma states that reduction with respect to a well-moded program preserves well-modedness of states:
\begin{restatable}{lemma}{sixLemWellModednessReduction}
\label{lem:well:modedness:2}
 Let $\Pr$ be a well-moded $\clphc$ program and $\lbr \Qr \sep \constr \rbr $ be a well-moded state. If $\lbr \Qr \sep \constr \rbr \rightarrowtail \lbr \Qr' \sep \constr' \rbr $ is a reduction using clauses in $\Pr$, then $\lbr \Qr' \sep \constr' \rbr $ is also a well-moded state.
\end{restatable}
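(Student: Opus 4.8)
The plan is to argue by case analysis on which of the four clauses in the definition of reduction produces $\lbr \Qr \sep \constr \rbr \rightarrowtail \lbr \Qr' \sep \constr' \rbr$. Write $\Qr = L_1,\ldots,L_n$ and let $L_i$ be the selected literal. Since the well-modedness of a state is defined disjunct-by-disjunct on its associated DNF formula, it suffices to exhibit, for every disjunct of the new state, a well-moded ordering of its extended literals. The two failing reductions --- $L_i$ a primitive constraint with $\sol(\constr \land L_i)=\false$, and $L_i$ an atom with $\defn_{\Pr}(L_i)=\emptyset$ --- both yield $\lbr \boxmy \sep \false \rbr$, whose associated formula is just $\false$; as $\invar(\false)=\outvar(\false)=\emptyset$, this state is well-moded, so these cases are immediate.

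In the atom-unfolding case, $L_i=p(t_1,\ldots,t_m)$ is replaced by $t_1 \eql r_1,\ldots,t_m \eql r_m, B$, where $p(r_1,\ldots,r_m)\myrule B$ is a fresh variant of a well-moded clause and $\constr$ is left unchanged. First I would fix a well-moded ordering of a disjunct of the old state; the literals preceding $L_i$ in it already produce $\invar(L_i)$. From this I build a well-moded ordering of the corresponding new disjunct by (a) retaining those preceding literals, (b) inserting the equations $t_j \eql r_j$ for the input positions of $p$ (their left-hand sides $\var(t_j)\subseteq\invar(L_i)$ are available, so the equation condition holds, and they produce the head input variables $\invar(A)$), (c) inserting a well-moded ordering of $B$, which exists by the clause conditions and produces $\outvar(A)$, (d) inserting the equations $t_j \eql r_j$ for the output positions, whose right-hand sides $\var(r_j)\subseteq\outvar(A)$ are now available, and (e) appending the literals that followed $L_i$. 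Freshness of the variant keeps the clause variables disjoint from the rest, avoiding capture, and monotonicity of the produced-variable sets keeps every retained literal well-moded. This is the standard well-modedness preservation argument; the clause conditions were tailored precisely so that each inserted literal meets its ordering requirement.

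The remaining and most delicate case is $L_i$ a primitive constraint with $\sol(\constr \land L_i)\neq\false$, giving $\Qr'=L_1,\ldots,L_{i-1},L_{i+1},\ldots,L_n$ and $\constr'=\sol(\constr \land L_i)$. Here the goal context $\Qr'$ is left \emph{unsubstituted}, so the task is to show that $\sol$ preserves well-modedness \emph{relative to} this context: if $\Qr' \land L_i \land \conjc$ is well-moded, then so is $\Qr' \land \conjc'$ for every disjunct $\conjc'$ of $\sol(L_i\land\conjc)$; the premise holds because it is a disjunct of the old well-moded state. Since $\constr$ is already in DNF and $L_i$ is a primitive constraint, forming the DNF of $\constr \land L_i$ is mere distribution and preserves well-modedness disjunct-wise; and since $\sol$ terminates (Theorem~\ref{thm:termination}) it is a finite iteration of $\step$, so by induction on the number of steps it suffices to show that a single solver rule preserves contextual well-modedness. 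For the variable-elimination rules {\sf (E1)}--{\sf (E7)} this is exactly Lemma~\ref{lem:well:modedness:1}: each keeps a solved equation $v \eql e$ and applies $\substb=\{v\mapsto e\}$ to the rest of the constraint but not to $\Qr'$, so $\Qr'$ plays the role of the unsubstituted conjunction $\conjc_1$ of that lemma and the rest of the constraint that of $\conjc_2$. The logical and failure rules only delete conjuncts or yield $\false$; the decomposition, deletion, and membership rules do not enlarge the free-variable set of a disjunct --- the one exception being the fresh term variable of {\sf (M9)}, which is immediately produced by its companion equation $\Sv \eql \Iv$ --- and they preserve the output-only status of every primitive constraint. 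Because each primitive constraint has only output positions, once the variables of a disjunct are seen to be produced in the same order as before, the three ordering conditions continue to hold.

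The main obstacle is this last case: one must check, rule by rule, that each solver transformation preserves well-modedness in the presence of the surrounding goal context $\Qr'$, tracking how the branching rules (such as {\sf (M2)}, {\sf (E3)}, {\sf (E4)}, {\sf (E7)}, {\sf (D1)}) and the fresh-variable rule {\sf (M9)} affect variable sets and the existence of a valid ordering. Lemma~\ref{lem:well:modedness:1} discharges the genuinely substitution-based rules, leaving the routine but unavoidable bookkeeping for the decomposition and membership rules.
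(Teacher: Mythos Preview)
Your proposal is correct and follows essentially the same route as the paper: the same four-way case split on the reduction rule, the same construction of a well-moded sequence in the atom-unfolding case (preceding literals, input equations, body, output equations, trailing literals), and the same reduction of the constraint case to a rule-by-rule check on the solver with Lemma~\ref{lem:well:modedness:1} handling the substitution-based steps. The only minor looseness is that {\sf (E3)}, {\sf (E4)}, {\sf (E6)}, {\sf (E7)} are not \emph{literally} instances of Lemma~\ref{lem:well:modedness:1}: one first observes that splitting off the solved equation (e.g., passing from $(\Sv,\hdg)\eql\tseq$ to $\Sv\eql\tseq_1\land\hdg\eql\tseq_2$) preserves well-modedness, and only then invokes the lemma for the substitution part---the paper makes this intermediate step explicit.
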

\begin{restatable}{corollary}{sevenCorWellModedness}
\label{cor:wellmoded}
If $\constr $ is a well-moded constraint, then $\sol(\constr)$ is also well-moded.
\end{restatable}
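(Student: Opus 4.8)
The plan is to reduce the statement to a single application of the rewrite step $\step$ and then to a case analysis over the individual solver rules. Recall that $\sol = \textsf{compose(dnf, NF(step))}$. By the definitions of Section~\ref{sect:well-moded} a well-moded constraint is a disjunction of well-moded conjunctions, so it is already in DNF and the \textsf{dnf} phase leaves it unchanged disjunct-wise; thus I only need to track the repeated applications of $\step$. By Theorem~\ref{thm:termination} this repetition is finite, so by induction on the number of steps it suffices to prove that a single $\step$ preserves well-modedness, i.e.\ that whenever $\step(\constr)=\constrd$ and $\constr$ is well-moded, so is $\constrd$. Since well-modedness of a DNF formula is defined disjunct-by-disjunct and each solver rule acts on a single conjunct modulo associativity and commutativity of $\land$ (a rule that introduces a disjunction simply replaces one conjunct by several), it further suffices to check that each rule turns a well-moded conjunction into well-moded conjunct(s). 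I would organize this as a case analysis following the grouping of Section~\ref{subsect:rules}.

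The routine cases come first. The rules in {\sf Log} and {\sf Fail} only delete or simplify literals and introduce no variables, and $\true$, $\false$ are well-moded by definition, so the witnessing linear order of literals is inherited directly. For {\sf Dec} and {\sf Del} the decisive observation is that condition~2 of well-modedness forces one \emph{entire} side of the equation being rewritten to have all its variables already among the earlier outputs. Consequently, decomposing or deleting matching subparts of an equation (i) preserves the determined-side property for each newly produced equation, and (ii) guarantees that any variable removed from the equation's output set (such as the shared $\Sv$ in {\sf Del1}, the common element $h$ in {\sf Del2}, or the variables forced to $\epsilon$ in {\sf Del3}) was already provided by an earlier literal, so no later literal loses a needed output. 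Hence the same order witnesses well-modedness.

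The substantial cases are {\sf Elim} and {\sf Memb}, and this is where I expect the main difficulty. The rules {\sf E1}, {\sf E2}, {\sf E5} are exactly instances of the situation analysed in Lemma~\ref{lem:well:modedness:1}: they retain a solved equation $v \eql e$ and apply $\vartheta=\{v\mapsto e\}$ to the rest, so Lemma~\ref{lem:well:modedness:1} yields well-modedness at once. The rules {\sf E6}, {\sf E7} additionally split a functor equation, and {\sf E3}, {\sf E4} split a leading hedge variable $\Sv$ while substituting $\{\Sv\mapsto \tseq_1\}$ into the residual equation and $\conjc$; in each case I would first invoke the Lemma~\ref{lem:well:modedness:1} argument on the substituted part and then check separately that the freshly created equation fits into the order, which it does because its variables are contained in those of the parent equation and so it can be placed at the same position. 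For {\sf Memb}, rule {\sf M1} is again a substitution handled by Lemma~\ref{lem:well:modedness:1}, while the decomposition and splitting rules {\sf M2}--{\sf M12} replace a membership atom $\hdg\inC\rh$ by equations and smaller membership atoms whose variables all lie in $\var(\hdg)$; since condition~3 requires a membership atom's variables to lie among the earlier outputs and these variables were already available, the new literals can be inserted at the same point. The delicate verification throughout is that eliminating a variable via $\vartheta$ (and, in {\sf M9}, introducing a fresh $\Iv$) never destroys the availability of an output needed by a later literal. This is precisely the content isolated by Lemma~\ref{lem:well:modedness:1}, so the main obstacle is not conceptual but the bookkeeping of checking, rule by rule, that its hypotheses are met and that the witnessing order can always be reconstructed.
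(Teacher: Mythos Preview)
Your proposal is correct, and the rule-by-rule analysis you sketch does go through (indeed, it mirrors almost exactly the case analysis carried out inside the proof of Lemma~\ref{lem:well:modedness:2}, Case~1). However, the paper proves the corollary by a much shorter indirect route that avoids re-doing any of this work.

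The paper's trick is to embed the constraint in a trivial state and invoke Lemma~\ref{lem:well:modedness:2} once. Concretely: if $\constr$ is well-moded, so is the state $\lbr a\eql a \sep \constr\rbr$ for an arbitrary symbol $a$. A single reduction step of the operational semantics selects the primitive constraint $a\eql a$ and produces $\lbr \boxmy \sep \sol(a\eql a \land \constr)\rbr$. Lemma~\ref{lem:well:modedness:2} guarantees this new state is well-moded, hence $\sol(a\eql a \land \constr)$ is well-moded; and $\sol(a\eql a \land \constr)=\sol(\constr)$ since $a\eql a$ is absorbed by the {\sf Log} rules. That is the entire proof.

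The difference, then, is purely one of packaging. Your approach unrolls $\sol$ and argues directly that each $\step$ preserves well-modedness; this is sound but duplicates the content of Case~1 of the proof of Lemma~\ref{lem:well:modedness:2}, which already establishes, rule by rule, that well-modedness of a state survives the solver's action on the constraint part. The paper simply recognises the corollary as a degenerate instance of that lemma (with a vacuous goal component) and cashes it in with the $a\eql a$ wrapper. Your route buys a self-contained argument that does not rely on the operational semantics; the paper's route buys brevity and reuse.
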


The following theorem shows that satisfiable well-moded constraints can be completely solved:
\begin{restatable}{theorem}{eightThmSolvedForm}
\label{lem:solvedform}
Let $\constr $ be a well-moded constraint and $\sol(\constr)= \constr'$, where $\constr'\neq \false$. Then $\constr'$ is solved. 
\end{restatable}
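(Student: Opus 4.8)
The plan is to combine the two facts already established about $\sol$. By Theorem~\ref{thm:equivalence}, $\sol(\constr)=\constr'$ is either $\false$ or partially solved, so under the hypothesis $\constr'\neq\false$ it is partially solved; and by Corollary~\ref{cor:wellmoded}, $\constr'$ is well-moded. Since a DNF constraint is solved exactly when each of its disjuncts is, and a DNF formula is well-moded exactly when each disjunct is, it suffices to take a single conjunction $\conjc$ of primitive constraints that is simultaneously well-moded and partially solved and prove that $\conjc$ is solved. Being partially solved, every conjunct of $\conjc$ is either solved in $\conjc$ or has one of the listed ``partially solved but not solved'' shapes (two membership shapes and three equational shapes), so the whole task reduces to ruling out these shapes in the presence of well-modedness.

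Fix a well-moded ordering $E_1,\ldots,E_n$ of the conjuncts of $\conjc$ and write $\mathrm{Out}_i=\bigcup_{j<i}\outvar(E_j)$. Since $\inC$ and $\eql$ have only output positions, $\outvar(E_j)=\var(E_j)$, so $\mathrm{Out}_i$ is just the set of variables occurring in the earlier conjuncts. The engine of the proof is the following descent observation about \emph{solved} conjuncts: if a hedge variable $\Sv$ occurs in a solved conjunct $E_j$, then $\Sv$ must already occur in a strictly earlier conjunct. Indeed, a solved equation has a single variable on its left-hand side, and that variable is unique in $\conjc$, hence cannot belong to $\mathrm{Out}_j$; well-modedness condition~(2) then forces the \emph{other} side into $\mathrm{Out}_j$, so every right-hand-side occurrence of $\Sv$ lies in $\mathrm{Out}_j$. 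For a solved membership conjunct in which $\Sv$ is the membership variable, condition~(3) puts all its variables, in particular $\Sv$, into $\mathrm{Out}_j$. The remaining solved shapes carry no hedge variable in a non-unique position, so the observation holds in every case.

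With this in hand I would prove, by induction on $i$, that every $E_i$ is solved. For the inductive step assume $E_1,\ldots,E_{i-1}$ are solved and suppose, for contradiction, that $E_i$ has one of the five bad shapes. Each bad shape exhibits a hedge variable at the head of every side that well-modedness might select: the two equational ``two-hedge-variable'' shapes and the unordered equation shape carry a hedge variable on both sides, while the two membership shapes carry one on their single left-hand side. Hence whichever side condition~(2) or~(3) certifies, some hedge variable $\Sv$ of $E_i$ lies in $\mathrm{Out}_i$, i.e. occurs in an earlier conjunct $E_{j_1}$ with $j_1<i$. That conjunct is solved by the induction hypothesis, so the descent observation yields an occurrence of $\Sv$ in a still earlier conjunct, and iterating produces an infinite strictly decreasing sequence of indices below $i$, which is impossible. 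The base case $i=1$ is immediate, since $\mathrm{Out}_1=\emptyset$ whereas every bad shape has a variable on each relevant side. Thus no $E_i$ is bad, every conjunct is solved, and $\conjc$, hence $\constr'$, is solved.

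The main obstacle is the descent observation of the second paragraph: it is precisely the point where well-modedness is turned into usable structural information, and it hinges on the fact that the left-hand-side variable of a solved equation is unique in $\conjc$ and therefore can never be supplied by $\mathrm{Out}_j$, which is what forces the right-hand side to be ``produced earlier.'' I would take care to verify this uniformly across all solved shapes (term, hedge, and function-variable equations, together with both membership shapes) and to treat term variables and hedge variables correctly; once the observation is in place, the infinite-descent argument closes each bad case in the same way.
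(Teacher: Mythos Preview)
Your proof is correct and follows the same overall architecture as the paper: invoke Theorem~\ref{thm:equivalence} for partial solvedness, Corollary~\ref{cor:wellmoded} for well-modedness, reduce to a single conjunction, fix a well-moded ordering, and argue by induction on the position that none of the five ``bad'' partially-solved shapes can occur.

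The one genuine difference is in how the inductive step is discharged. The paper establishes a stronger invariant: it shows that each solved $\pmc_1,\ldots,\pmc_{i-1}$ is in fact an \emph{equation with a ground right-hand side} (and hence no solved membership atoms appear at all). From this, $\mathrm{Out}_i$ contains only the unique left variables of earlier equations, which by solvedness cannot reappear in $\pmc_i$; so well-modedness forces one side of $\pmc_i$ (if an equation) or all of $\pmc_i$ (if a membership atom) to be ground, immediately contradicting every bad shape. Your argument instead tracks a single hedge variable $\Sv$ from the bad $E_i$ backwards through the well-moded sequence, using uniqueness of solved left variables at each step to force $\Sv$ into a strictly earlier conjunct, and closes by infinite descent. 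The two devices are equivalent---your descent is precisely what one obtains by unrolling the paper's groundness invariant---but the paper's formulation is a bit more direct and yields the extra structural information (groundness, no solved memberships) as a by-product, while yours avoids having to prove that solved membership atoms cannot occur by simply handling them uniformly in the descent.
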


We illustrate how to solve a simple well-moded constraint:

\begin{example} Let $\constr = f(\Sv, a,\Svy)\eql f(a,b,a,c,c) \land f(\Svz,a,\Iv) \eql f(\Svy,\Sv) \land \Svy \inC c(\eps)^\strs$. Then $\sol$ performs the following derivation (some steps are contracted):
\begin{alignat*}{7}
\constr &\leadsto {} 
    & & \bigl (\Sv \eql \epsilon \land (a,\Svy) \eql (a,b,a,c,c) \land f(\Svz,a,\Iv) \eql f(\Svy,\Sv) \land \Svy \inC c(\eps)^\strs \bigr )  \\
    & & \lor {}& \bigl (\Sv \eql a \land (a,\Svy) \eql (b,a,c,c) \land f(\Svz,a,\Iv) \eql f(\Svy,\Sv) \land \Svy \inC c(\eps)^\strs \bigr )\\
    & & \lor {}&\bigl (\Sv \eql (a,b) \land (a,\Svy) \eql (a,c,c) \land f(\Svz,a,\Iv) \eql f(\Svy,\Sv) \land \Svy \inC c(\eps)^\strs \bigr ) \\
    & & & \cdots \\
    & & \lor {}  &\bigl (\Sv \eql (a,b,a,c,c) \land (a,\Svy) \eql \epsilon \land f(\Svz,a,\Iv) \eql f(\Svy,\Sv) \land \Svy \inC c(\eps)^\strs \bigr ) \\ 
 &\leadsto^+ {}  & &\bigl (\Sv \eql \epsilon \land \Svy \eql (b,a,c,c) \land f(\Svz,a,\Iv) \eql f(\Svy,\Sv) \land \Svy \inC c(\eps)^\strs \bigr )  \\
   & & \lor {} &\bigl (\Sv \eql (a,b) \land \Svy \eql (c,c) \land f(\Svz,a,\Iv) \eql f(\Svy,\Sv) \land \Svy \inC c(\eps)^\strs \bigr )  \\
 &\leadsto {}  & &\bigl (\Sv \eql \epsilon \land \Svy \eql (b,a,c,c) \land f(\Svz,a,\Iv) \eql f(b,a,c,c,\Sv) \land (b,a,c,c) \inC c(\eps)^\strs \bigr )  \\
   & & \lor {} &\bigl (\Sv \eql (a,b) \land \Svy \eql (c,c) \land f(\Svz,a,\Iv) \eql f(\Svy,\Sv) \land \Svy \inC c(\eps)^\strs \bigr )  \\
 &\leadsto {}  & &\bigl (\Sv \eql \epsilon \land \Svy \eql (b,a,c,c) \land f(\Svz,a,\Iv) \eql f(b,a,c,c,\Sv) \\
       &    & & \qquad \land b \inC c(\eps) \land (a,c,c) \inC c(\eps)^\strs \bigr )  \\
   & & \lor {} &\bigl (\Sv \eql (a,b) \land \Svy \eql (c,c) \land f(\Svz,a,\Iv) \eql f(\Svy,\Sv) \land \Svy \inC c(\eps)^\strs \bigr )  \\
 & \leadsto {}  & &\bigl (\Sv \eql (a,b) \land \Svy \eql (c,c) \land f(\Svz,a,\Iv) \eql f(\Svy,\Sv) \land \Svy \inC c(\eps)^\strs \bigr )  \\
 & \leadsto^+ {}  && \bigl (\Sv \eql (a,b) \land \Svy \eql (c,c) \land f(\Svz,a,\Iv) \eql f(c,c,a,b) \land (c,c) \inC c(\eps)^\strs \bigr )  \\
  &  \leadsto^+ {}  && \bigl (\Sv \eql (a,b) \land \Svy \eql (c,c) \land f(\Svz,a,\Iv) \eql f(c,c,a,b) \bigr )  \\
  &  \leadsto^+ {}  && \bigl (\Sv \eql (a,b) \land \Svy \eql (c,c) \land \Svz \eql (c,c) \land \Iv \eql b \bigr ).
\end{alignat*}
The obtained constraint is solved.
\end{example}

The next theorem is the main result for well-moded $\clphc$ programs. It states that any finished derivation from a well-moded goal leads to a solved constraint or to a failure:
\begin{restatable}{theorem}{nineThmMainWellModed}
 \label{thm:finalstate}
Let $ \lbr \Qr \sep \true \rbr \rightarrowtail \cdots \rightarrowtail   \lbr \boxmy \sep \constr\rbr$ be a finished derivation with respect to a well-moded $\clphc$ program, starting from a well-moded goal $\Qr$. If $\constr\neq \false$, then $\constr$ is solved.
\end{restatable}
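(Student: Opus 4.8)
The plan is to deduce this from Theorem~\ref{lem:solvedform} by establishing that the terminal constraint $\constr$ is a \emph{well-moded} constraint that $\sol$ leaves unchanged. Concretely I would proceed in four steps: (1) show that every state occurring in the derivation is well-moded; (2) use that the terminal goal is empty to read off that $\constr$ itself is well-moded; (3) observe that $\constr$ is already a $\step$-normal form in DNF, so that $\sol(\constr)=\constr$; and (4) apply Theorem~\ref{lem:solvedform} to conclude that $\constr$ is solved.

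For step (1) I would induct on the length of the derivation. The base case is the initial state $\lbr \Qr \sep \true \rbr$: since $\invar(\true)=\outvar(\true)=\emptyset$, the conjunct $\true$ neither demands nor supplies output variables, so this state is well-moded exactly because $\Qr$ is a well-moded goal. For the inductive step, a single reduction $\lbr \Qr \sep \constr \rbr \rightarrowtail \lbr \Qr' \sep \constr' \rbr$ with respect to the well-moded program $\Pr$ preserves well-modedness of the state by Lemma~\ref{lem:well:modedness:2}. Hence all states are well-moded, in particular the last one, $\lbr \boxmy \sep \constr \rbr$.

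For step (2), recall that a state $\lbr L_1,\ldots,L_n \sep \conjc_1 \lor \cdots \lor \conjc_m \rbr$ is well-moded iff the formula $(L_1 \land \cdots \land L_n \land \conjc_1) \lor \cdots \lor (L_1 \land \cdots \land L_n \land \conjc_m)$ is well-moded. When the goal is empty, this formula is just $\constr = \conjc_1 \lor \cdots \lor \conjc_m$, so well-modedness of the terminal state is literally the assertion that $\constr$ is a well-moded constraint. For step (3), note that by the reduction rules the constraint component of any state is either $\true$ or the result of an application of $\sol$; in both cases it is in DNF and is a normal form with respect to $\step$, because $\sol = {\sf compose(dnf,NF(step))}$ returns, by {\sf NF(step)} and Theorem~\ref{thm:termination}, a $\step$-normal form, while $\true$ is trivially one. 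Since no $\step$ rule then applies to $\constr$ and {\sf dnf} leaves an already-DNF formula unchanged, we obtain $\sol(\constr)=\constr$.

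Combining these, $\constr$ is well-moded, $\sol(\constr)=\constr$, and $\constr \neq \false$ by hypothesis; Theorem~\ref{lem:solvedform} then yields that $\constr$ is solved, as required. The substantive work here is carried entirely by the already-available Lemma~\ref{lem:well:modedness:2}; the one place that still needs care is the idempotence invoked in step (3), namely that a constraint which is simultaneously in DNF and a $\step$-normal form is a fixed point of $\sol$. I would settle this by confirming that each partially-solved-but-unsolved shape enumerated in Section~\ref{sect:preliminaries} is irreducible under every rule group ({\sf Log}, {\sf Fail}, {\sf Del}, {\sf Dec}, {\sf Elim}, {\sf Memb}), so that {\sf NF(step)} cannot fire, together with idempotence of {\sf dnf} on formulas already in DNF.
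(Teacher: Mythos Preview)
Your proof is correct and takes a cleaner, genuinely different route from the paper's. The paper first generalizes the statement to derivations ending in a goal $\Qr'$ that is either $\boxmy$ or consists only of \emph{propositional} (0-ary) atoms, and then inducts on the length of the derivation with a case split on the last step: either the last step processes a primitive constraint (so $\constr_{n+1}=\sol(\constr_n\land L)$ and Theorem~\ref{lem:solvedform} applies to the well-moded $\constr_n\land L$), or it resolves a propositional atom with a clause whose body is again propositional (so $\constr_{n+1}=\constr_n$ and the induction hypothesis applies). The generalization to propositional-only goals is precisely what makes the induction hypothesis available in the second case.

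You avoid that detour entirely by observing an invariant the paper does not exploit: the constraint component of every state is either $\true$, $\false$, or a $\sol$ output, and hence is already a $\step$-normal form in DNF. Together with well-modedness of the final state (repeated Lemma~\ref{lem:well:modedness:2}) and emptiness of the final goal, this lets you apply Theorem~\ref{lem:solvedform} directly to $\constr$ via $\sol(\constr)=\constr$. This is shorter and conceptually more transparent; the paper's argument trades the idempotence observation for an auxiliary strengthening of the statement.

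One small clarification on your step~(3): the idempotence $\sol(\constr)=\constr$ does not require inspecting the partially-solved-but-unsolved shapes at all. Since $\constr$ is a $\sol$ output, it is already in DNF (each $\step$ rule preserves DNF, as the paper notes when introducing the rules) and is by construction a $\step$-normal form; hence {\sf dnf} and {\sf NF(step)} are both no-ops on it. The shape analysis you mention is what underlies Theorem~\ref{thm:equivalence} (normal forms are partially solved), which is the converse direction and is not needed here.
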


\section{Programs in the KIF Form}
\label{sect:kif}

Knowledge Interchange Format, shortly KIF~\cite{genesereth92knowledge}, is a computer-oriented language for the interchange of knowledge among disparate programs. It permits variadic syntax and hedge variables, under the restriction that such variables are only the last arguments of subterms they appear in. Such a fragment has some good computation properties, e.g., unification is unitary~\cite{DBLP:conf/rta/Kutsia03}. The special form of programs and constraints considered in this section originates from this restriction. 

\emph{Terms and hedges in the KIF form} or, shortly, \emph{KIF terms} and \emph{KIF hedges}, are defined by the following grammar:
\begin{alignat*}{3}
 t_\kappa & ::= \Iv \mid \fo(\hdg_\kappa)  \mid \fu({t_\kappa}_1,\ldots,{t_\kappa}_n) \mid  \Fv({t_\kappa}_1,\ldots,{t_\kappa}_n)  \quad (n\ge 0) & &\text{KIF Term}\\
% \tseq_\kappa & ::= {t_\kappa}_1,\ldots,{t_\kappa}_n \quad (n\ge 0) & & \text{KIF Term sequence}\\
% \ts & ::= t \mid \Sv  & & \text{Hedge element}\\
 \hdg_\kappa & ::= {t_\kappa}_1,\ldots,{t_\kappa}_n  \mid {t_\kappa}_1,\ldots,{t_\kappa}_n,\Sv \quad (n\ge 0) &  \qquad & \text{KIF Hedge}
\end{alignat*}

That means that a term is in the KIF form if hedge variables occur only below ordered function symbols as the last arguments. For example, the terms $\fo(\Iv,\fo(a,\Sv),\fu(\Iv,b),\Sv)$ and $\fo(a,\Iv,b)$ are in the KIF form, while $\fo(\Sv, a,\Sv)$ and $\fu(\Iv,\fo(a,\Sv),\fu(\Iv,b),\Sv)$ are not.

If the language does not contain unordered function symbols, then we permit hedge variables under function variables, again in the last position, i.e., of the form $\Fv(\hdg_\kappa)$.

In this section we consider only KIF terms. Therefore, the subscript $\kappa$ will be omitted. 

\emph{KIF equations} and \emph{KIF atoms} are constructed from KIF terms. In a \emph{KIF membership atom} $\hdg \inC \rh $, the hedge $\hdg$ is a KIF hedge.

\emph{KIF formulas} are constructed from KIF primitive constraints and KIF atoms. This special form guarantees that the solver does not need to use all the rules. Simply inspecting them, we can see that {\sf Del1}, {\sf E3}, {\sf E4}, and {\sf M3} are not used. In {\sf Del3}, it is guaranteed that $\hdg_2$ will be always empty, and in {\sf M1} the $n$ will be equal to 1. 

Similarly to the well-moded restriction above, our interest to the KIF fragment is justified by its two important properties that characterize the KIF constraint solving and derivation of KIF goals:
\begin{itemize}
  \item The solver can completely solve satisfiable KIF constraints (instead of partial solutions computed in the general case). See Theorem~\ref{thm:kif:solved}.
  \item Any finished derivation from a KIF goal with respect to a KIF program either ends with a completely solved constraint, or fails. See Theorem~\ref{thm:kif:derivation}.
\end{itemize}

Their proofs are easier than the ones of the corresponding statements for well-moded programs. This is largely due to the following lemma:
\begin{restatable}{lemma}{tenLemKIFPartiallySolved}
  \label{lem:kif:partially:solved}
  Any partially solved KIF constraint is solved.
\end{restatable}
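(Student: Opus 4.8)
Any partially solved KIF constraint is solved.

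The plan is to show that the extra forms allowed by the definition of \emph{partially solved} (beyond those already counted as \emph{solved}) simply cannot occur when the constraint is built from KIF terms and KIF hedges. Recall that a partially solved conjunction consists of constraints that are either solved or have one of five specific shapes: two membership forms ($\fu(\hdg_1,\Sv,\hdg_2)\inC\fu(\rh)$ and $(\Sv,\hdg)\inC\rh$ with $\hdg\neq\epsilon$) and three equational forms ($(\Sv,\hdg_1)\eql(\Svy,\hdg_2)$, $(\Sv,\hdg_1)\eql(\tseq,\Svy,\hdg_2)$, and $\fu(\hdg_1,\Sv,\hdg_2)\eql\fu(\hdg_3,\Svy,\hdg_4)$). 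First I would reduce the proof to a single claim: \emph{none of these five non-solved forms is a KIF primitive constraint}. If that claim holds, then every conjunct of a partially solved KIF conjunction must already be solved, so the whole conjunction is solved, and hence (a disjunction of solved conjunctions being a solved constraint) the constraint itself is solved.

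Next I would dispatch the five forms by appealing directly to the KIF grammar. The governing syntactic fact is that in a KIF term a hedge variable may appear only as the \emph{last} argument of an ordered function symbol (or of a function variable when no unordered symbols are present), and that a KIF hedge has the shape ${t_\kappa}_1,\ldots,{t_\kappa}_n$ or ${t_\kappa}_1,\ldots,{t_\kappa}_n,\Sv$ — that is, at most one hedge variable, and only at the very end. The two forms involving $\fu(\ldots,\Sv,\ldots)$ (namely the membership $\fu(\hdg_1,\Sv,\hdg_2)\inC\fu(\rh)$ and the equation $\fu(\hdg_1,\Sv,\hdg_2)\eql\fu(\hdg_3,\Svy,\hdg_4)$) are immediately excluded, because the KIF grammar forbids a hedge variable as an argument of any unordered symbol $\fu$ at all: the production for $\fu$ takes only terms ${t_\kappa}_1,\ldots,{t_\kappa}_n$. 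The membership form $(\Sv,\hdg)\inC\rh$ with $\hdg\neq\epsilon$ is excluded because it requires the KIF hedge on the left to begin with a hedge variable $\Sv$ followed by something nonempty, contradicting the requirement that a hedge variable may occur only in last position.

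For the two remaining equational forms I would argue the same way: $(\Sv,\hdg_1)\eql(\Svy,\hdg_2)$ with $\hdg_1,\hdg_2\neq\epsilon$ forces the left KIF hedge $(\Sv,\hdg_1)$ to start with a hedge variable and have nonempty tail, which no KIF hedge does; and $(\Sv,\hdg_1)\eql(\tseq,\Svy,\hdg_2)$ with $\hdg_1\neq\epsilon$ again places $\Sv$ in non-final position on the left. Thus all five non-solved partially-solved forms are ruled out by the grammar, completing the reduction. I do not expect a serious obstacle here; the only point requiring a little care is the bookkeeping of the two syntactic dimensions simultaneously — a partially solved conjunct may itself be a compound equation or membership, so I would be explicit that the sides of these equations and the left side of the membership are \emph{KIF hedges} (as stipulated by the definitions of KIF equations and KIF membership atoms), and then read off, from the KIF-hedge production alone, that a hedge variable can never sit anywhere but at the end, which is exactly what each forbidden form violates.
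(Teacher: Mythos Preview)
Your proof is correct and follows essentially the same approach as the paper: reduce to showing that none of the five non-solved ``partially solved'' shapes can be a KIF primitive constraint, and then rule each out via the KIF grammar's restriction that hedge variables appear only in the last position under ordered symbols (and never under unordered ones). The paper's own proof is simply a terser version of yours---it lists the five forms and dismisses them in one sentence (``by the definition of KIF form, none of the above mentioned forms \ldots\ are permitted''), whereas you spell out the grammar check for each case individually.
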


One can see that no solving rule inserts a term or a hedge variable after the last argument of subterms in constraints. That means, KIF constraints are again transformed into KIF constraints. Hence, the constraint computed by $\sol$ will be a KIF constraint. It leads us to the following result:
\begin{restatable}{theorem}{elevenThmKIFSolved}
  \label{thm:kif:solved}
  Let $\constr $ be a KIF constraint and $\sol(\constr)= \constr'$, where $\constr'\neq \false$. Then $\constr'$ is solved. 
\end{restatable}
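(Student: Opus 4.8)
The plan is to combine three facts already established in the excerpt: the general termination and correctness of $\sol$ (Theorem~\ref{thm:termination} and Theorem~\ref{thm:equivalence}), the observation that $\sol$ preserves the KIF form, and Lemma~\ref{lem:kif:partially:solved}, which says that any partially solved KIF constraint is in fact solved. First I would invoke Theorem~\ref{thm:equivalence} to conclude that, since $\sol(\constr)=\constr'$, the output $\constr'$ is either $\false$ or partially solved; by hypothesis $\constr'\neq\false$, so $\constr'$ is partially solved. The remaining work is to upgrade "partially solved" to "solved," and for that I would show that $\constr'$ is itself a KIF constraint, so that Lemma~\ref{lem:kif:partially:solved} applies directly.

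The key intermediate step, then, is the claim that $\sol$ maps KIF constraints to KIF constraints. The excerpt already remarks, immediately before the theorem, that "no solving rule inserts a term or a hedge variable after the last argument of subterms in constraints," which is exactly the invariant needed. To make this rigorous I would argue by induction on the number of $\step$ applications performed by $\sol$, checking that each individual rewrite rule preserves the defining property of KIF terms and hedges — namely that every hedge variable occurs only as the last argument of an ordered function symbol (or, in the absence of unordered symbols, of a function variable). Since $\sol$ first converts to DNF and then repeatedly applies $\step$, and since DNF conversion only manipulates the propositional structure without touching the inner term/hedge syntax, it suffices to verify the invariant rule by rule. Here I would lean on the preceding paragraph's observation that, for KIF inputs, several rules ({\sf Del1}, {\sf E3}, {\sf E4}, {\sf M3}) never fire, and the rules that do fire — decomposition, the restricted forms of {\sf Del3} and {\sf M1}, variable elimination, and the membership rules that split off leading terms — all either remove material or reproduce subhedges in positions that respect the "hedge variable last" discipline.

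The main obstacle I expect is precisely this case analysis over the solving rules, and in particular the variable elimination and membership rules, where substitution $\constr\vartheta$ or decomposition could in principle relocate a hedge variable into a non-final position. For the substitution rules ({\sf E1}, {\sf E2}, {\sf E5}, {\sf E6}, {\sf E7}) I would note that a KIF hedge variable $\Sv$ is bound by an equation $\Sv\eql\hdg$ only in the final argument slot, and substituting a KIF hedge into a final-slot occurrence keeps everything in the KIF form; one must check that the solved equation $\Sv\eql\hdg$ retained on the left is itself harmless (it is trivially KIF, being a top-level equation). For the membership rules, the delicate point is {\sf M2}, which replaces $(t,\hdg)\inC\rh$ by $t\inC f(\rh_1)\land \hdg\inC\rh_2$: since $\hdg$ is the tail of a KIF hedge, it is again a KIF hedge, so the resulting membership atom stays KIF. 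Once the invariant is confirmed in all these cases, the proof closes immediately.

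A concise writeup would read as follows.

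\begin{proof}
By Theorem~\ref{thm:termination}, $\sol(\constr)$ is defined, and by Theorem~\ref{thm:equivalence} the result $\constr'$ is either $\false$ or partially solved. Since $\constr'\neq\false$ by assumption, $\constr'$ is partially solved. It therefore suffices, by Lemma~\ref{lem:kif:partially:solved}, to show that $\constr'$ is a KIF constraint.

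Recall that $\sol$ first rewrites $\constr$ into an equivalent DNF constraint and then applies $\step$ repeatedly. The DNF transformation only rearranges the propositional structure of the formula and does not alter the internal syntax of the terms, hedges, and regular expressions occurring in the primitive constraints; hence it preserves the KIF form. It remains to check that each application of $\step$ preserves the KIF form, which we do by a case analysis on the rule applied, using the defining property that in a KIF term every hedge variable occurs only as the last argument of an ordered function symbol (or, when no unordered symbols are present, of a function variable).

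The logical, failure, and decomposition rules either delete material or split a hedge $(t_1,\hdg_1)\eql(t_2,\hdg_2)$ into $t_1\eql t_2\land \hdg_1\eql \hdg_2$, where $\hdg_1$ and $\hdg_2$ are tails of KIF hedges, hence again KIF hedges. For KIF inputs, the rules {\sf Del1}, {\sf E3}, {\sf E4}, and {\sf M3} are never applicable, and {\sf Del3} fires only with $\hdg_2=\epsilon$, producing solved equations. The variable elimination rules replace a variable by the expression bound to it in a solved equation occupying a final argument slot; substituting a KIF expression into such a slot yields a KIF constraint, and the retained solved equation is itself KIF. Among the membership rules, the only one that decomposes a nonground hedge is {\sf M2}, which replaces $(t,\hdg)\inC\rh$ by $t\inC f(\rh_1)\land \hdg\inC\rh_2$; since $\hdg$ is the tail of a KIF hedge it is again a KIF hedge, so the new membership atom is KIF. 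For KIF inputs {\sf M1} fires only with $n=1$, contributing a single solved equation. All remaining rules leave the inner term and hedge structure in the KIF form. In every case no term or hedge variable is ever placed after the last argument of a subterm, so the output of $\step$ is again a KIF constraint.

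By induction on the number of $\step$ applications, $\constr'$ is a KIF constraint. Being partially solved and KIF, it is solved by Lemma~\ref{lem:kif:partially:solved}.
\end{proof}
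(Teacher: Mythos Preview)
Your proposal is correct and follows essentially the same approach as the paper: invoke Theorem~\ref{thm:equivalence} to get that $\constr'$ is partially solved, observe that $\sol$ preserves the KIF form, and then apply Lemma~\ref{lem:kif:partially:solved}. The only difference is that the paper treats the KIF-preservation step as a one-line reference to the observation preceding the theorem, whereas you spell out the rule-by-rule case analysis; your expanded version is sound and matches the paper's informal remarks about which rules become inapplicable or restricted on KIF inputs.
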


We illustrate now how to solve a simple KIF constraint:

\begin{example} Let $\constr = f(x,\Sv)\eql f(g(\Svy), a, \Svy) \land \Sv \inC a(\eps)^\strs \land \Svy \inC  a(\eps)\concs a(b(\eps)^\strs)^\strs$. Then $\sol$ performs the following derivation:
\begin{alignat*}{1}
\constr \leadsto {} & x \eql g(\Svy) \land \Sv \eql (a, \Svy) \land \Sv \inC a(\eps)^\strs \land \Svy \inC  a(\eps)\concs a(b(\eps)^\strs)^\strs\\
  \leadsto {} & x \eql g(\Svy) \land \Sv \eql (a, \Svy) \land (a, \Svy) \inC a(\eps)^\strs \land \Svy \inC  a(\eps)\concs a(b(\eps)^\strs)^\strs\\
         \leadsto {} & x \eql g(\Svy) \land \Sv \eql (a, \Svy) \land \Svy \inC a(\eps)^\strs \land \Svy \inC  a(\eps)\concs a(b(\eps)^\strs)^\strs\\
         \leadsto {} & x \eql g(\Svy) \land \Sv \eql (a, \Svy) \land \Svy \inC  a(\eps)\concs a(\eps)^\strs
\end{alignat*}
The obtained constraint is solved.
\end{example}

A \emph{state $\langle L_1,\ldots, L_n \sep \conjc_1\lor \cdots \lor \conjc_m \rangle $ is in the KIF form} (\emph{KIF state}), if the formula $(L_1 \land\cdots \land L_n\land \conjc_1) \lor \cdots \lor (L_1 \land\cdots \land L_n \land \conjc_m)$ is a KIF formula.

\emph{KIF clauses} are constructed from KIF atoms and literals. \emph{KIF programs} are sets of KIF clauses. It is not hard to check that each reduction step (with respect to a KIF program) in the operational semantics preserves KIF states: It follows from the definition of the operational semantics and the fact that $\sol$ computes KIF constraints. Therefore, we can establish the following theorem:
\begin{restatable}{theorem}{twelveThmKIFDerivation}
  \label{thm:kif:derivation}
  Let $ \lbr \Qr \sep \true \rbr \rightarrowtail \cdots \rightarrowtail   \lbr \boxmy \sep \constr'\rbr$ be a finished derivation with respect to a KIF program, starting from a KIF goal $\Qr$. If $\constr'\neq \false$, then $\constr'$ is solved.
\end{restatable}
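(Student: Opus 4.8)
The plan is to prove Theorem~\ref{thm:kif:derivation} by reducing it to the machinery already established for KIF constraints, exactly in parallel with how the well-moded case (Theorem~\ref{thm:finalstate}) is handled. First I would observe that the claim hinges on two facts that are already available or nearly so: that $\sol$ transforms a KIF constraint into a KIF constraint (stated in the paragraph preceding Theorem~\ref{thm:kif:solved}), and that each reduction step of the operational semantics preserves the KIF form of a state (asserted in the paragraph just before the theorem). Granting these, the structure of the argument is short, so the real work is to make the preservation claims precise and then string them together.

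The key steps, in order, are as follows. First I would show, by induction on the length of the derivation, that every state $\lbr \Qr_i \sep \constr_i \rbr$ occurring in the derivation is a KIF state. The base case is the initial state $\lbr \Qr \sep \true \rbr$: since $\Qr$ is a KIF goal and $\true$ is trivially a KIF constraint, this state is KIF. For the induction step I would examine each of the four reduction rules of the operational semantics. When $L_i$ is a primitive constraint, the new constraint is $\sol(\constr \land L_i)$; since $\constr$ and $L_i$ are KIF and $\sol$ maps KIF constraints to KIF constraints, the result is KIF, and the remaining goal is a sub-sequence of $\Qr_i$, hence still KIF. When $L_i$ is an atom $p(t_1,\ldots,t_m)$ resolved against a clause $(p(r_1,\ldots,r_m)\myrule B)$ from $\defn_{\Pr}$, the new goal inserts the equations $t_j \eql r_j$ together with the body $B$; because the program is a KIF program, the $r_j$ and $B$ are built from KIF terms, the $t_j$ are KIF by the induction hypothesis, and the resulting conjunction is therefore a KIF formula. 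The failure transitions produce $\lbr \boxmy \sep \false \rbr$, which is vacuously KIF.

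Having established that the final state $\lbr \boxmy \sep \constr' \rbr$ is a KIF state, I would conclude as follows. Because the derivation is finished and $\constr' \neq \false$, the constraint $\constr'$ is partially solved (by the definition of a finished derivation and the properties of $\sol$ from Theorem~\ref{thm:equivalence}). Since $\constr'$ is also a KIF constraint, Lemma~\ref{lem:kif:partially:solved} applies directly and tells us that any partially solved KIF constraint is in fact solved. This yields the desired conclusion.

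The main obstacle I anticipate is purely bookkeeping rather than conceptual: verifying cleanly that $\sol$ preserves the KIF property. This requires inspecting each solving rule from Section~\ref{subsect:rules} and confirming that none of them ever introduces a hedge variable in a non-final position of a subterm, or below an unordered symbol in a way that violates the KIF grammar. The paper's remark that rules {\sf Del1}, {\sf E3}, {\sf E4}, and {\sf M3} are never triggered on KIF input, and that {\sf Del3} and {\sf M1} act in degenerate ways, already isolates precisely the rules that could threaten the invariant; so the burden is to check that the surviving decomposition, elimination, and membership rules (especially {\sf D1}, {\sf M2}, and {\sf M12}, which permute or split argument sequences) keep hedge variables in their trailing positions. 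Once this syntactic invariant is discharged, the reduction-step preservation and the final appeal to Lemma~\ref{lem:kif:partially:solved} are immediate, which is why the author remarks that this proof is easier than its well-moded analogue.
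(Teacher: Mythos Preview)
Your proposal is correct and follows essentially the same route as the paper: use preservation of the KIF form under reduction (which the paper asserts and you spell out by induction on derivation length and case analysis on the reduction rules) to conclude that $\constr'$ is a KIF constraint, then combine the fact that a non-failed finished derivation yields a partially solved constraint with Lemma~\ref{lem:kif:partially:solved}. The only difference is that you make explicit the inductive argument and the rule-by-rule check that $\sol$ preserves KIF form, whereas the paper leaves both as one-line observations.
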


\begin{example}
  \label{exmp:dl}
  The well-known technique of appending two difference lists can be used in $\clphc$ for a more general task: to combine arguments of arbitrary two terms. The program remains the same as in the standard logic programming:
\begin{alignat*}{1}
 \mathit{append\_dl}(x_1\mhyphen x_2,\, x_2\mhyphen x_3,\, x_1 \mhyphen x_3),
\end{alignat*}
where the hyphen is a function symbol and $x_1,x_2,x_3$ are term variables. The KIF goal 
 \begin{alignat*}{1}
 \mathit{append\_dl}(f_1(a,b,\Sv)\mhyphen f_2(\Sv),\, f_2(c,d,e,\Svy)\mhyphen f_3(\Svy),\, x \mhyphen f_3 )
\end{alignat*}
can be used to append to the arguments of $f_1(a,b)$ the arguments of $f_2(c,d,e)$, obtaining $f_1(a,b,c,d,e)$. Note that the terms may have different heads. The derivation proceeds as follows:
\begin{alignat*}{1}
  & \lbr \mathit{append\_dl}(f_1(a,b,\Sv)\mhyphen f_2(\Sv),\, f_2(c,d,e,\Svy)\mhyphen f_3(\Svy),\, x \mhyphen f_3 ) \sep \true\rbr \\
\rightarrowtail \quad & \lbr x_1\mhyphen x_2 \eql f_1(a,b,\Sv)\mhyphen f_2(\Sv),\, x_2\mhyphen x_3 \eql f_2(c,d,e,\Svy)\mhyphen f_3(\Svy),\, x_1 \mhyphen x_3 \eql x \mhyphen f_3 \sep \true \rbr\\
\rightarrowtail \quad & \lbr x_2\mhyphen x_3 \eql f_2(c,d,e,\Svy)\mhyphen f_3(\Svy),\, x_1 \mhyphen x_3 \eql x \mhyphen f_3 \sep x_1\eql f_1(a,b,\Sv) \land  x_2 \eql f_2(\Sv) \rbr \\
\rightarrowtail \quad & \lbr x_1 \mhyphen x_3 \eql x \mhyphen f_3 \sep \\
 & \qquad x_1\eql f_1(a,b,c,d,e,\Svy) \land  x_2 \eql f_2(c,d,e,\Svy) \land x_3 \eql f_3(\Svy) \land \Sv \eql (c,d,e,\Svy) \rbr\\
\rightarrowtail  \quad & \lbr \boxmy \sep \\
 & \qquad x_1\eql f_1(a,b,c,d,e) \land  x_2 \eql f_2(c,d,e) \land x_3 \eql f_3 \land \Sv \eql (c,d,e) \land \Svy \eql \epsilon \land {}\\
 & \qquad x\eql f_1(a,b,c,d,e) \rbr.
\end{alignat*}
The constraint in the final state is solved.
\end{example}

\section{Conclusion}
\label{sect:conclusion}

Solving equational and membership constraints over hedges is not an easy task: The problem is infinitary and any procedure that explicitly computes all solutions is non-terminating. The solver that we presented in this paper is not complete, but it is terminating. It solves constraints partially and tries to detect failure as early as it can. 

Incorporating the solver into the CLP schema gives $\clphc$: constraint logic programming for hedges. We defined algebraic semantics for it and used it to characterized the constraint solver: The output of the solver (which is either partially solved of $\false$) is equivalent to the input constraint in all intended structures. 

The fact that the solver, in general, returns a partially solved result (when it does not fail), naturally raises the question: Are there some interesting fragments of constraints that the solver can completely solve? We give a positive answer to this question, defining well-moded and KIF constraints and showing their complete solvability. 

It immediately poses the next question: Can one characterize $\clphc$ programs that generate only well-moded or KIF constraints only? We show that by extending the notions of well-modedness and KIF form to programs, we get the desired fragments. Any finished derivation of a goal for such fragments gives a definite answer: Either the goal fails, or a solved constraint is returned.

The constraints we consider in this paper are positive, but at least the well-moded programs can be easily enriched with the negation. Well-modedness guarantees that the eventual test for disequality or non-membership in constraints will be performed on ground hedges, which can be effectively decided.

\section*{Acknowledgments}

This research has been partially supported by LIACC through Programa de Financiamento Plurianual of the Funda\c{c}\~{a}o para a Ci\^{e}ncia e Tecnologia (FCT), by the FCT fellowship (ref. SFRH/BD/62058/2009), by the Austrian Science Fund (FWF) under the project SToUT (P 24087-N18), and the by Rustaveli Science Foundation under the grants DI/16/4-120/11 and FR/611/4-102/12.

\bibliographystyle{acmtrans}
%\bibliography{bibliography}

\newpage
\appendix

\section{Proofs}

\oneThmSatisfiable*

\begin{proof}
Since $\constr$ is solved, each disjunct $\conjc$ in it has a form $v_1\eql e_1\land\cdots\land v_n\eql \penalty100000 e_n\land\allowbreak v'_1 \inC \rh_1\land\cdots\land v'_m \inC \rh_m$ where  $m,n\ge0$,  $v_i,v'_j\in \cVv$  and $e_i$ is an expression corresponding to $v_i$. Moreover, $v_1,\ldots,v_n,v'_1,\ldots,v'_m$ are distinct and $\regl{\rh_j}\neq \emptyset $ for all $1 \le j \le m$. Note that while $v_i$'s do not occur anywhere else in $\conjc$, it still might be the case that some $v'_j$, $1\le j\le m$, occurs in some $e_k$, $1\le k \le n$.

Let $e'_j$ be an element of $\regl{ \rh_j}$  for all $1 \le j \le m$. Assume that for each $1 \le i \le n$, the substitution $\sigma'_i$ is a grounding substitution for $e_i$ with the property that $v'_j\sigma'_i = e'_j$ for all $1\le j \le m$. Then $\sigma = \{ v_1\mapsto e_1\sigma'_1,\ldots,v_n\mapsto e_n\sigma'_n, v'_1 \mapsto e'_1,\ldots,v'_m \mapsto e'_m\}$ solves $\conjc$. Therefore, $\istruct\models \exists \constr $ holds.
\end{proof}

\twoThmTermination*

\begin{proof}
We need to show that {\sf NF(step)} terminates for any quantifier-free constraint in DNF. We define a complexity measure  $\cm(\constr)$ for such constraints, and show that $\cm(\constr')<\cm(\constr)$ holds whenever $\constr'=\text{\sf step}(\constr)$. 

For a hedge $\hdg$ (resp., for a regular expression $\rh$), we denote by $\size(\hdg)$ (resp., by $\size(\rh)$) its denotational length, e.g., $\size(\epsilon)=0$,  $\size(\eps)=1$, $ \size(f(f(a)),\Sv)=4$, and $ \size(f(f( a \concs b^\strs))) = 6$.  

The complexity measure $\cm(\conjc)$ of a conjunction of primitive constraints $\conjc$ is the tuple $\langle N_1, M_1,N_2,M_2,M_3 \rangle$ defined as follows ($\mset{}$ stands for a multiset):
\begin{itemize}
 \item $N_1$  is the number of unsolved variables in $\conjc$.
 \item $M_1 := \mset{\size(\hdg) \mid\hdg \inC \rh \in \conjc, \hdg\neq\epsilon}$.
 \item $N_2$ is the number of  primitive constraints in the form $\Sv \inC \rh$ in $\conjc$ .
  \item $M_2 := \mset{\size(\rh) \mid\hdg \inC \rh \in \conjc}$.
 \item $M_3 := \mset{\size(t_1)+\size(t_2) \mid t_1 \eql t_2 \in \conjc}$.
\end{itemize}

The complexity measure $\cm(\constr)$ of a constraint $\constr = \conjc_1 \lor \cdots \lor \conjc_n$ is defined as $\mset{\cm(\conjc_1),\ldots, \cm(\conjc_n)}$. 

Measures are compared by the multiset extension of the lexicographic ordering on tuples. The components that are natural numbers ($N_1$ and $N_2$) are, of course, compared by the standard ordering on naturals. The multiset components $M_1$, $M_2$, and $M_3$ are compared by the multiset extension of the standard ordering on the naturals.  %The strict part $>$ of $\ge$ is obviously well-founded. 

The strict part of the ordering on measures is obviously well-founded. The {\sf Log} rules strictly reduces it. For the other rules, the table below shows which rule reduces which component of the measure. The symbols $>$ and $\ge$ indicate the strict and non-strict decrease, respectively. It implies the termination of the algorithm $\sol$.

%\begin{table}[h]
\begin{center}
 \begin{tabular}{|lc|c|c|c|c|c|}
  \hline
  Rule & \qquad & $N_1$ & $M_1$ & $N_2$ & $M_2$ & $M_3$  \\
  \hline \hline
  {\sf (M1), (M10), (E1)--(E7)} &  & $>$ & & & &\\
  {\sf (F5), (F7), (M2), (M3), (M8), (M11), (M12)} & & $\ge$ &  $>$ & & &\\
  {\sf  (M9)} & & $\ge$ &  $\ge$ & $>$ & &\\
  {\sf (F6), (M4)--(M7)} &&  $\ge $ &  $\ge$&  $\ge$  & $>$ &\\
  {\sf (D1), (D2), (F1)--(F4), (Del1)--(Del3)} & & $\ge$ & $\ge$ &  $\ge$ &  $\ge$ & $>$  \\ \hline
 \end{tabular} %\vspace{0.2cm}
%\caption{Rules decrease the measure components.}
%\label{tab:termination}
\end{center}
%\end{table}
\end{proof}

\threeLemEquiv*

\begin{proof}%[Sketch]
By case distinction on the inference rules of the solver, selected by the strategy {\sf first} in the application of {\sf step}. We illustrate here two cases, when the selected rules are {\sf (E3)} and {\sf (M2)}. For the other rules the lemma can be shown similarly.
 
In {\sf (E3)}, $\constr$ has a disjunct $\conjc = (\Sv,\hdg)\eql \tseq\land  \conjc'$ with $\Sv\not\in\var (\tseq)$, and $\constrd$ is the result of replacing $\conjc$ in $\constr$ with the disjunction $\constr'=\bigvee_{\tseq=(\tseq_1,\tseq_2)} (\Sv \eql \tseq_1 \land \hdg \vartheta \eql \tseq_2 \land \conjc'\vartheta)$ where $\vartheta=\{\Sv \mapsto \tseq_1\}$.  Therefore, it is sufficient to show that $\istruct\models \forall (\conjc \lra \overline{\exists}_{\var(\constr)}\constr')$. Since $\var(\constr')=\var(\conjc)$, this amounts to showing that for all ground substitutions $\sigma$ of $\var(\conjc)$ we have $\istruct\models (\Sv \sigma,\hdg \sigma) \eql \tseq \sigma \land  \conjc' \sigma$ iff $\istruct\models (\bigvee_{\tseq=(\tseq_1,\tseq_2)} (\Sv \eql \tseq_1 \land \hdg \vartheta \eql \tseq_2 \land \conjc'\vartheta))\sigma.$

\begin{itemize}
\item Assume $\istruct\models (\Sv \sigma,\hdg \sigma) \eql \tseq \sigma \land  \conjc' \sigma$. We can split  $\tseq \sigma $ into $\tseq_1 \sigma $ and $\tseq_2 \sigma $ such that $\Sv \sigma = \tseq_1 \sigma $ and $\hdg \sigma = \tseq_2 \sigma $. Now, we show $v \vartheta \sigma = v \sigma $ for all $v \in \var(\Sv,\hdg, \tseq)$. Indeed, if $v\neq \Sv$, the equality trivially holds. If $v=\Sv$, we have $\Sv\vartheta \sigma = \tseq_1\sigma=\Sv\sigma$. Hence, $\istruct\models (\bigvee_{\tseq=(\tseq_1,\tseq_2)} (\Sv \eql \tseq_1 \land \hdg \vartheta \eql \tseq_2 \land \conjc'\vartheta))\sigma.$

\item Assume $\istruct\models (\bigvee_{\tseq=(\tseq_1,\tseq_2)} (\Sv \eql \tseq_1 \land \hdg \vartheta \eql \tseq_2 \land \conjc'\vartheta))\sigma$. Then there exists the split $\tseq=(\tseq_1,\tseq_2)$ such that $\istruct\models (\Sv \sigma \eql \tseq_1 \sigma \land \hdg \vartheta \sigma \eql \tseq_2 \sigma \land \conjc'\vartheta \sigma)$. Again, we can show $v \vartheta \sigma = v \sigma $ for all $v \in \var(\Sv,\hdg, \tseq)$. Hence, $\istruct\models (\Sv \sigma,\hdg \sigma) = \tseq \sigma \land  \conjc' \sigma$. It finishes the proof for {\sf (E3)}.
\end{itemize}

Now, let the selected rule be {\sf (M2)}. In this case $\constr$ has a disjunct $\conjc = (t,\hdg) \inC  \rh\land \conjc'$ with $\hdg\neq\epsilon$ and $\rh \neq\eps $.  Then $\constrd$ is the result of replacing $\conjc$ in $\constr$ with $\constr'=\bigvee_{(f(\rh_1),\rh_2)\in \lf(\rh)}(t \inC f(\rh_1) \land \hdg \inC \rh_2\land \conjc')$. Therefore, to show  $\istruct\models \forall (\constr \lra \overline{\exists}_{\var(\constr)}\constrd)$, it is enough to show that  $\istruct\models \forall (\conjc \lra \overline{\exists}_{\var(\constr)}\constr')$. Since $\var(\constr')=\var(\conjc)$, this amounts to showing that for all ground substitutions $\sigma$ of $\var(\conjc)$ we have $\istruct\models (t \sigma,\hdg \sigma) \inC  \rh\land \conjc'\sigma $ iff $\istruct\models (\bigvee_{(f(\rh_1),\rh_2)\in \lf(\rh)}(t \inC f(\rh_1) \land \hdg \inC \rh_2\land \conjc'))\sigma$.

\begin{itemize}
\item Assume $\istruct\models (t \sigma,\hdg \sigma) \inC  \rh\land \conjc'\sigma $. By the property (\ref{eq:lf}) above and by the definitions of intended structure and entailment, we get that $\istruct\models (t \sigma,\hdg \sigma) \inC  \rh\land \conjc '\sigma $ implies $\istruct\models (t \sigma,\hdg \sigma) \inC  \lf(\rh) \land \conjc '\sigma $ . Hence, we can conclude $\istruct\models (\bigvee_{(f(\rh_1),\rh_2)\in \lf(\rh)}(t \sigma \inC f(\rh_1) \land \hdg \sigma \inC \rh_2\land \conjc'\sigma))$.

\item Assume $\istruct\models (\bigvee_{(f(\rh_1),\rh_2)\in \lf(\rh)}(t \sigma \inC f(\rh_1) \land \hdg \sigma \inC \rh_2\land \conjc'\sigma))$. Then we have $\istruct\models (t \sigma,\hdg \sigma) \inC  \lf(\rh) \land \conjc'\sigma$ which, by (\ref{eq:lf}), implies  $\istruct\models (t \sigma,\hdg \sigma) \inC  \rh\land \conjc'\sigma$.
\end{itemize}
\end{proof} 

\fourThmEquiv*

\begin{proof} 
We assume without loss of generality that $\constr$ is in DNF. 
$\istruct \models \forall \bigl(\constr \lra \overline{\exists}_{\var(\constr)}\constrd\bigr)$ follows from Lemma~\ref{lem:equivalence} and the following property: If $\istruct \models \forall \bigl(\constr_1 \lra \overline{\exists}_{\var(\constr_1)}\constr_2\bigr)$ and $\istruct \models \forall \bigl(\constr_2 \lra \overline{\exists}_{\var(\constr_2)}\constr_3\bigr)$, then $\istruct \models \forall \bigl(\constr_1 \lra \overline{\exists}_{\var(\constr_1)}\constr_3\bigr)$. The property itself relies on the fact that $\istruct \models \forall \bigl(\overline{\exists}_{\var(\constr_1)} \overline{\exists}_{\var(\constr_2)} \constr_3 \lra \overline{\exists}_{\var(\constr_1)}\constr_3\bigr)$, which holds because all variables introduced by the rules of the solver in $\constr_3$ are fresh not only for $\constr_2$, but also for $\constr_1$.

As for the partially solved constraint, by the definition of $\sol$ and Theorem~\ref{thm:termination}, $\constrd$ is in a normal form. Assume by contradiction that it is not partially solved. By inspection of the solver rules, based on the definition of partially solved constraints, we can see that there is a rule that applies to $\constrd$. But this contradicts the fact that $\constrd$ is in a normal form. Hence, $\constrd$ is partially solved. 
\end{proof}

\fiveLemWellModedness*
\begin{proof}
  The point in this lemma is that it does not matter how $\conjc_1$ and $\conjc_2$ are chosen.  We consider two cases. First, when $v \eql e$ is the leftmost literal containing $v$ in a well-moded sequence corresponding to $\conjc_1\land \conjc_2 \land v \eql e$ and, second, when this is not the case.

  \emph{Case 1.} Let $\se_1,v \eql e,\se_2$ be a well-moded sequence corresponding to $\conjc_1\land \conjc_2 \land v \eql e$, such that $\se_1$ does not contain $v$. Note that there is no assumption (apart from what guarantees well-modedness of $\conjc_1\land \conjc_2 \land v \eql e$) on the appearance of literals in $\se_1$ and $\se_2$: They may contain literals from $\conjc_1$ only, from $\conjc_2$ only, or from both $\conjc_1$ and $\conjc_2$. 

Well-modedness of $\se_1,v \eql e,\se_2$ requires the variables of $e$ to appear in $\se_1$. Consider the sequence $\se_1,v \eql e,\se_2[\substb]$, where the notation $\se[\substb]$ stands for such an instance of $\se$ in which $\substb$ affects only literals from $\conjc_2$. Then $\se_1,v \eql e$ is well-moded and it can be safely extended by $\se_2[\substb]$ without violating well-modedness, because the variables in $v\eql e$ still precede (in the well-moded sequence) the literals from $\se_2[\substb]$, and the relative order of the other variables (in the well-moded sequence) does not change. Hence, $\se_1,v \eql e,\se_2[\substb]$ is a well-moded sequence that corresponds to $\conjc_1 \land \conjc_2\substb \land v \eql e$.

  \emph{Case 2.} Let $\se_1,\lit,\se_2,v \eql e,\se_3$ be a well-moded sequence corresponding to $\conjc_1 \land \conjc_2 \land v \eql e$, where $\lit$ is the leftmost literal that contains $v$ in an output position. Again, we make no assumption on literal appearances in the subsequences of the sequence. Then $\se_1,\lit,v \eql e,\se_2,\se_3$ is also a well-moded sequence (corresponding to $\conjc_1 \land \conjc_2 \land v \eql e$), because $v$ still appears in an output position in $\lit$ left to $v \eql e$, the variables in $e$ still precede literals from $\se_3$, and the relative order of the other variables does not change. For literals in $\se_2$ that contain variables from $e$ such a reordering does not matter.

  Note that $v$ does not appear in $\se_1$: If it were there in some literal in an output position, then $\lit$ would not be the leftmost such literal. If it were there in some literal $\lit'$ in an input position, then well-modedness of the sequence would require $v$ to appear in an output position in another literal $\lit''$ that is even before $\lit'$, i.e., to the left of $\lit$ and it would again contradict the assumption that $\lit$ is the leftmost literal containing $v$ in an output position.

  Let $\se_1,\lit[\substb],v\eql e, \se_2[\substb],\se_3[\substb]$ be a sequence of all literals taken from $\conjc_1 \land \conjc_2 \land v \eql e$. We distinguish two cases, depending whether $\substb$ affects $\lit$ or not.
  \begin{description}
  \item[\rm\emph{$\substb$ affects $\lit$.}] Then it replaces $v$ in $\lit$ with $e$, i.e., $\lit[\substb]=\lit\substb$. Then the variables of $e$ appear in output positions in $\lit\substb$ and, hence, placing $v\eql e$ after  $\lit\substb$ in the sequence would not destroy well-modedness. As for the $\lit\substb$ itself, we have two alternatives: 
  \begin{enumerate}
    \item $\lit\substb$ is an equation, say $s\eql t\substb$, obtained from $\lit = (s\eql t)$ by replacing occurrences of $v$ in $t$ by $e$. In this case, by well-modedness of $\se_1,\lit,v \eql e,\se_2,\se_3$, variables of $s$ appear in $\se_1$ and $s$ does not contain $v$. Then the same property is maintained in $\se_1,\lit\substb,v\eql e, \se_2[\substb],\se_3[\substb]$, since $s$ remains in $\lit\substb$ and $\se_1$ does not change.
    \item $\lit\substb$ is an atom. Then replacing $v$ by $e$ in an output position of $\lit$, which gives $\lit\substb$, does not affect well-modedness.
  \end{enumerate}
  Hence, we got that $\se_1,\lit,v\eql e$ is well-moded. Now we can safely extend this sequence with $\se_2[\substb],\se_3[\substb]$, because variables in new occurrences of $e$ in $\se_2[\substb],\se_3[\substb]$ are preceded by $v\eql e$, and the relative order of the other variables does not change. Hence, the sequence $\se_1,\lit\substb,v\eql e, \se_2[\substb],\se_3[\substb]$ is well-moded.
 \item[\rm\emph{$\substb$ does not affect $\lit$.}] Then $\lit[\substb]=\lit$, the sequence $\se_1,\lit,v\eql e$ is well-moded and it can be safely extended with $\se_2[\substb],\se_3[\substb]$, obtaining the well-moded sequence $\se_1,\lit,v\eql\penalty1000 e,\allowbreak \se_2[\substb],\se_3[\substb]$.
\end{description}
Hence, we showed also in \emph{Case 2} that there exists a well-moded sequence of literals, namely, $\se_1,\lit[\substb],\allowbreak v\eql e, \se_2[\substb],\se_3[\substb]$, that corresponds to $\conjc_1 \land \conjc_2\substb \land v \eql e$. Hence, $\conjc_1 \land \conjc_2\substb \land v \eql e$ is well-moded.
\end{proof}

\sixLemWellModednessReduction*

\begin{proof}
Let $\Qr\myequiv \lit_1, \ldots, \lit_i,\ldots, \lit_n$, $\constr =\conjc_1\lor\cdots\lor\conjc_m$, and $\lbr \Qr \sep \constr \rbr $ be a well-moded state. We will use the notation $\Qrc$ for the conjunction of all literals in $\Qr$, i.e., $\Qrc = \lit_1\land  \cdots \land \lit_i \land \cdots \land \lit_n$. Assume that $\lit_i$ is the selected literal in reduction that gives $\lbr \Qr' \sep \constr' \rbr$ from $\lbr \Qr \sep \constr \rbr$. We consider four possible cases, according to the definition of operational semantics:

\emph{Case 1.} Let $\lit_i$ be a primitive constraint and $\constr' \not \myequiv \false$. Let $\constrd$ denote the DNF of $\constr\land \lit_i$. %Then $\constr'=\sol( \constrd)$. 

In order to prove that  $\lbr \Qr' \sep \constr' \rbr$ % = \lbr \Qr' \sep \sol( \constrd) \rbr$ 
is well-moded, by the definition of $\sol$, it is sufficient to prove that $\lbr \Qr' \sep \step( \constrd) \rbr$ is well-moded. Since, obviously, $\lbr \Qr' \sep \constrd \rbr$ is a well-moded state, we have to show that state well-modedness is preserved by each rule of the solver. 

Since $\constr' \not \myequiv \false$, the step is not performed by any of the failure rules of the solver. For the rules {\sf M1--M8}, {\sf M11--M12}, {\sf D1}, and {\sf D2}, it is pretty easy to verify that $\lbr \Qr' \sep \step( \constrd) \rbr$ is well-moded. Therefore, we consider the other rules in more detail. We denote the disjunct of $\constrd$ on which the rule is applied by $\conjc_\constrd$. The cases below are distinguished by the rules:
\begin{description}
  \item[\rm {\sf Del}.] Here the same variable is removed from both sides of the selected equation. Assume $\sl_1,s\eql t,\sl_2$ is a well-moded sequence corresponding to $\Qrc'\land \conjc_\constrd$, and $s\eql t$ is the selected equation affected by one of the deletion rules. Well-modedness of $\sl_1,s\eql t,\sl_2$ requires that the variable deleted at this step from $s\eql t$ should occur in an output position in some other literal in $\sl_1$. Let $s'\eql t'$ be the equation obtained by the deletion step from $s\eql t$. Then $\sl_1,s'\eql t',\sl_2$ is again well-moded, which implies that $\Qrc'\land \step(\conjc_\constrd)$ is well-moded and, therefore, that $\lbr \Qr' \sep \step( \constrd) \rbr$ is well-moded.
 \item[\rm {\sf M9}.] Let $\Qrc'\land \conjc_\constrd$ be represented as $\Qrc'\land \Sv \inC f(\rh) \land \conjc'$, where $\Sv \inC f(\rh)$ is the membership atom affected by the rule. Note that then $\Qrc'\land \Sv \eql \Iv \land \Iv \inC f(\rh) \land \conjc'$ is also well-moded. Applying Lemma~\ref{lem:well:modedness:1}, we get that $\Qrc'\land \Sv \eql \Iv \land \Iv \inC f(\rh) \land \conjc'\substb$ is well-moded, where $\substb =\{\Sv \mapsto \Iv\}$. Then we get well-modedness of $\Qrc'\land \step(\conjc_\constrd)$, which implies well-modedness of $\lbr \Qr' \sep \step( \constrd) \rbr$.
  \item[\rm {\sf M10}.] Let $\Qrc'\land \conjc_\constrd$ be represented as $\Qrc'\land \Fv(\hdg) \inC f(\rh) \land \conjc'$, where $\Fv(\hdg) \inC f(\rh)$ is the membership atom affected by the rule. Note that then $\Qrc'\land \Fv(\hdg) \inC f(\rh) \land \Fv \eql f \land \conjc'$ is also well-moded. Applying Lemma~\ref{lem:well:modedness:1}, we get that $\Qrc'\land \Fv(\hdg)\substb \inC f(\rh) \land \Fv \eql f \land \conjc'\substb$ is well-moded, where $\substb=\{\Fv \mapsto f\}$. But it means that $\Qrc'\land \step(\conjc_\constrd)$ is well-moded, which implies that $\lbr \Qr' \sep \step( \constrd) \rbr$ is well-moded.
 \item[\rm {\sf E1}, {\sf E2}.] For these rules, well-modedness of $\Qrc'\land \step(\conjc_\constrd)$ is a direct consequence of Lemma~\ref{lem:well:modedness:1}.
 \item[\rm {\sf E3}.] Let $\Qrc'\land \conjc_\constrd$ be represented as $\Qrc' \land (\Sv,\hdg_1) \eqls \hdg_2 \land \conjc'$, where $(\Sv,\hdg_1) \eqls \hdg_2$ is the equation affected by the rule and $\Sv\not\in\var (\hdg_2)$. Then $\Qrc' \land \Sv \eql \hdg' \land \hdg_1 \eql \hdg''  \land \conjc'$ is also well-moded for some $\hdg'$ and $\hdg''$ with $(\hdg',\hdg'')= \hdg_2$. Applying Lemma~\ref{lem:well:modedness:1}, we get that $\Qrc' \land \Sv \eql \hdg' \land \hdg_1 \substb \eql \hdg''  \land \conjc'\substb $ is well-moded, where $\substb=\{\Sv \mapsto \hdg'\}$. Since $\hdg'$ and $\hdg''$ were arbitrary, it implies that $\Qrc'\land \step(\conjc_\constrd)$ and, therefore, $\lbr \Qr' \sep \step( \constrd) \rbr$ is well-moded.
 \item[\rm {\sf E4}.] Similar to the case of the rule {\sf E3}.
\end{description}
\emph{Case 2.} Let $\lit_i$ be a primitive constraint and $\constr'=\false$, where $\constr'= \sol(\constr \land \lit_i)$. Then by the operational semantics we have $\Qr'=\boxmy$ and the theorem trivially holds, since the state $\lbr \boxmy \sep \false \rbr$ is well-moded.

\emph{Case 3.} Let $\lit_i$ be an atom $p(t_1,\ldots,t_k,\ldots,t_l)$. Assume that $\Pr$ contains a clause of the form $p(r_1,\ldots,r_k,\ldots,r_l) \myrule \Bd$, where $\Bd$ denotes the body of the clause. Assume also that for the predicate $p$, the set $\{1,\ldots,k\}$ is the set of the input positions and $\{k+1,\ldots,l\}$ is the set of the output ones. Then we have 
\begin{alignat*}{1}
  \Qr = {} & \lit_1, \ldots, \lit_{i-1}, p(t_1,\ldots,t_k,\ldots,t_l),\lit_{i+1},\ldots, \lit_n,\\
  \Qr'= {} & \lit_1, \ldots, \lit_{i-1}, t_1 \eql r_1,\ldots, t_k\eql r_k, \ldots, t_l\eql r_l, \Bd, \lit_{i+1},\ldots, \lit_n,\\
  \constr' = {} & \constr = \conjc_1\lor \cdots \lor \conjc_m.
\end{alignat*}

From well-modedness of the state $\lbr \Qr\sep \constr \rbr$ we know that for all $1\le j\le m$, the literals from $\lit_1, \ldots, \lit_{i-1}, \lit_{i+1},\ldots, \lit_n$ and $\conjc_j $ can be reordered in two sequences of literals $\sl^1_j$ and $\sl^2_j$ in such a way that the sequence $\sl^1_j,p(t_1,\ldots,t_k,\ldots,t_l),\sl^2_j$ is well-moded. Then we have $\var(t_1,\ldots,$ $t_k)\subseteq\outvar (\sl^1_j)$. Therefore, we obtain that the sequence
\begin{equation}\label{wm:1}
  \sl^1_j, t_1 \eql r_1,\ldots, t_k\eql \penalty1000r_k ,\sl^2_j
\end{equation}
is well-moded for all $1\le j\le m$.

From well-modedness of $p(r_1,\ldots,r_k,\ldots,r_l) \myrule \Bd$ we know that $\var(r_{k+1},\ldots,r_l)\subseteq \outvar(\Bd) \cup \var(r_1,\ldots,r_k)$. By item \ref{wel} of the definition of program well-modedness,  the literals of $\Bd$ can be put into a well-moded sequence, written, say, as $B_1,\ldots,B_q$, such that for each $1\le u\le q$ and $v\in \invar(B_u)$ we have $v\in \outvar(B_{u'})$ for some $u'<u$, or $v \in \var(r_1,\ldots,r_k)$. From then we can say that the sequence 
\begin{equation}\label{wm:2}
t_1 \eql r_1,\ldots, t_k\eql \penalty1000r_k,\allowbreak B_1,\ldots,B_q, t_{k+1} \eql r_{k+1},\ldots, t_l\eql r_l
\end{equation}
is well-moded.

From (\ref{wm:1}) and (\ref{wm:2}), by the definition of well-modedness, we can conclude that
\begin{equation}\label{wm:3}
 \sl^1_j,t_1 \eql r_1,\ldots, t_k\eql \penalty1000r_k,\allowbreak B_1,\ldots,B_q, t_{k+1} \eql r_{k+1},\ldots, t_l\eql r_l,\sl^2_j
\end{equation}
is well-moded for all $1\le j\le m$. By construction, the literals in (\ref{wm:3}) are exactly those from $\Qrc'\land \conjc_j$ for $1\le j\le m$. It means that $\lbr \Qr' \sep \conjc_j  \rbr$ is well-moded for all $1\le j\le m$, which implies that $\lbr \Qr' \sep \constr'  \rbr$ is well-moded.

\emph{Case 4.} If $\defn_{\prog}(\lit_i)=\emptyset$, then  $\Qr'=\boxmy$, $\constr'=\false$, and the theorem trivially holds.
\end{proof}

\sevenCorWellModedness*
\begin{proof}
By the definition of well-modedness, since $\constr$ is well-moded, the state $\lbr a\eql a\sep \constr \rbr$ is also well-moded, where $a$ is an arbitrary function symbol. By the operational semantics, we have the reduction $\lbr a\eql a \sep \constr \rbr \rightarrowtail \lbr \boxmy \sep \sol(a\eql a \land \constr) \rbr $. By Lemma \ref{lem:well:modedness:2}, we get that $\lbr \boxmy \sep \sol(a\eql a \land \constr) \rbr $ is also well-moded and, hence, $\sol(a\eql a \land \constr)$ is well-moded. By the definition of $\sol$ and the rules of the solver, it is straightforward to see that $\sol(a\eql a \land \constr) = \sol(\constr)$. Hence, $ \sol(\constr)$ is well-moded.
\end{proof}

\eightThmSolvedForm*

\begin{proof}
By the Corollary \ref{cor:wellmoded}, the constraint  $\constr'$  is  well-moded.  If $\constr'$ is $\true$ then it  is already solved. Consider the case when $\constr'$ is not  $\false$. Let $\constr'=\conjc_1\lor\cdots\lor\conjc_m$. Since $\constr'\neq \false$, by the Theorem \ref{thm:equivalence}  $\constr'$ is partially solved. It means that  each $\conjc_j$, $1\le j\le m$, is partially solved and well-moded. By definition, $\conjc_j$ is well-moded if there exists a permutation of its literals $\pmc_1, \ldots, \pmc_i,\ldots, \pmc_n$ which satisfies the well-modedness property. Assume $\pmc_1, \ldots , \pmc_{i-1}$ are solved. By this assumption and the definition of well-modedness, each of $\pmc_1, \ldots , \pmc_{i-1}$ is an equation whose one side is a variable that occurs neither in its other side nor in any other primitive constraint. Then well-modedness of $\conjc_j$ guarantees that the other sides of these equations are ground terms. Assume by contradiction that $\pmc_i $ is partially solved, but not solved. If $\pmc_i$ is a membership constraint, well-modedness of $\conjc_j$ implies that $\pmc_i$ does not contain variables and, therefore, can not be partially solved. Now let $\pmc_i$ be an equation. Since all variables in $\pmc_1, \ldots, \pmc_{i-1}$ are solved, they can not appear in $\pmc_i$. From this fact and well-modedness of $\conjc_j$, $\pmc_i$ should have at least one ground side. But then it can not be partially solved. The obtained contradiction shows that $\constr'$ is solved. 
\end{proof}

\nineThmMainWellModed*

\begin{proof}
We prove a slightly more general statement: Let $ \lbr \Qr \sep \true \rbr \rightarrowtail \cdots \rightarrowtail   \lbr \Qr' \sep \constr'\rbr$ be a derivation with respect to a well-moded program, starting from a well-moded goal $\Qr $ and ending with $\Qr'$ that is either $\boxmy$ or consists only of atomic formulas without arguments (propositional constants). If $\constr'\neq \false$, then $\constr'$ is solved.

To prove this statement, we use induction on the length $n$ of the derivation. When $n=0$, then $\constr'=\true$ and it is solved. Assume the statement holds when the derivation length is $n$, and prove it for the derivation with the length $n+1$. Let such a derivation be $ \lbr \Qr \sep \true \rbr \rightarrowtail \cdots \rightarrowtail   \lbr \Qr_n \sep \constr_n\rbr \rightarrowtail   \lbr \Qr_{n+1} \sep \constr_{n+1}\rbr $. Assume that $\Qr_{n+1}$ that is either $\boxmy$ or consists only of propositional constants. According to the operational semantics, there are two possibilities how the last step is made:
\begin{enumerate}
  \item $\Qr_n$ has a form (modulo permutation) $\lit,p_1,\ldots,p_m$, $m\ge 0$, where $\lit$ is primitive constraint, the $p$'s are propositional constants, $\Qr_{n+1}=p_1,\ldots,p_m$, and $\constr_{n+1}=\sol(\constr_n\land \lit)$.
  \item $\Qr_n$ has a form (modulo permutation) $q,p_1,\ldots,p_m$, $m\ge 0$, where $q$ and $p$'s are propositional constants, the program contains a clause $q \leftarrow q_1,\ldots,q_k$, $k\ge 0$, where all $q_i$, $1\le i\le k$, are propositional constants, $\Qr_{n+1}=q_1,\ldots,q_k,p_1,\ldots,p_m$, and $\constr_{n+1}=\constr_n$.
\end{enumerate}
In the first case, by the $n$-fold application of Lemma~\ref{lem:well:modedness:2} we get that $\lbr \Qr_n \sep \constr_n\rbr$ is well-moded. Since the $p$'s have no influence on well-modedness (they are just propositional constants), $\constr_n\land \lit$ is well-moded and hence it is solvable. By Theorem~\ref{lem:solvedform} we get that if $\constr_{n+1}=\sol(\constr_n\land \lit)\neq \false$, then $\constr_{n+1}$ is solved.

In the second case, since $G_n$ consists of propositional constants only, by the induction hypothesis we have that if $\constr_n$ is not $\false$, then it is solved. But $\constr_n=\constr_{n+1}$.  It finishes the proof.
\end{proof}

\tenLemKIFPartiallySolved*

\begin{proof}
 Let $\conjc$ be a partially solved conjunction of primitive constraints. Then, by the definition, each primitive constraint $\pmc$ from $\conjc$ should be either solved in $\conjc$, or should have one of the following forms:
\allowdisplaybreaks
\begin{itemize}
\item Membership atom:
  \begin{itemize}
   \item $\fu(\hdg_1,\Sv,\hdg_2) \inC \fu(\rh)$.
    \item $(\Sv,\hdg) \inC \rh $  where  $\hdg \neq \epsilon $ and $\rh$ has the form  $\rh_1 \concs \rh_2 $ or $\rh_1^\strs$.
\end{itemize}
\item Equation: 
   \begin{itemize}
     \item $ (\Sv,\hdg_1)\eql (\Svy,\hdg_2) $ where $\Sv \not = \Svy$, $\hdg_1\neq \epsilon$ and $\hdg_2\neq \epsilon$.
     \item $ (\Sv,\hdg_1)\eql (\tseq,\Svy,\hdg_2) $, where $\Sv\not \in \var(\tseq)$, $\hdg_1\neq \epsilon$, and $\tseq\neq \epsilon$.  The variables $\Sv $ and $\Svy$ are not necessarily distinct.
  \item $\fu(\hdg_1,\Sv,\hdg_2)\eql \fu(\hdg_3,\Svy,\hdg_4)$ where $(\hdg_1,\Sv,\hdg_2)$ and $(\hdg_3,\Svy,\hdg_4)$ are disjoint.
\end{itemize}
 \end{itemize}
However, $\pmc$ is also a KIF constraint. By the definition of KIF form, none of the above mentioned forms for membership atoms and equations are permitted. Hence, $\pmc$ is solved in $\conjc$ and, therefore, $\conjc$ is solved. It implies the lemma.
\end{proof}

\elevenThmKIFSolved*

\begin{proof}
 By Theorem~\ref{thm:equivalence}, $\constr'$ should be in a partially solved form. It is also in the KIF form, as we noted above. Then, by Lemma~\ref{lem:kif:partially:solved}, $\constr'$ is solved.
\end{proof}

\twelveThmKIFDerivation*
\begin{proof}
  Since the reduction preserves KIF states, $\constr'$ is in the KIF form. Since the derivation is finished and $\constr'\neq \false$, by the definition of finished derivation, $\constr'$ is partially solved. By Lemma~\ref{lem:kif:partially:solved}, we conclude that $\constr'$ is solved.
\end{proof}

\end{document}